\def\BibTeX{{\rm B\kern-.05em{\sc i\kern-.025em b}\kern-.08em
    T\kern-.1667em\lower.7ex\hbox{E}\kern-.125emX}}
\newtheorem{mydef}{Definition}
\newtheorem{myeg}{Example}
\newtheorem{mythm}{Theorem}
\begin{document}

\title{Fast Answering Pattern-Constrained Reachability Queries with Two-Dimensional Reachability Index}

\author{Huihui Yang, Pingpeng Yuan
        % <-this % stops a space
\thanks{Huihui Yang and Pingpeng Yuan are with the School of Computer Science and Technology, Huazhong University of Science and Technology, Wuhan 430074, China (e-mail: hh$\_$yang@hust.edu.cn; ppyuan@hust.edu.cn).}% <-this % stops a space
\thanks{Manuscript received April 19, 2021; revised August 16, 2021.}}

% The paper headers
\markboth{Journal of \LaTeX\ Class Files,~Vol.~14, No.~8, August~2021}%
{Shell \MakeLowercase{\textit{et al.}}: A Sample Article Using IEEEtran.cls for IEEE Journals}

\IEEEpubid{0000--0000/00\$00.00~\copyright~2021 IEEE}
% Remember, if you use this you must call \IEEEpubidadjcol in the second
% column for its text to clear the IEEEpubid mark.

\maketitle

\author{\IEEEauthorblockN{Huihui Yang, Pingpeng Yuan, Hai Jin}

\IEEEauthorblockA{
%\textit{National Engineering Research Center for Big Data Technology and System} \\
%\textit{Cluster and Grid Computing Lab}\\
%\textit{Services Computing Technology and System Lab} \\
\textit{}, \\ 
\textit{Huazhong University of Science and Technology}\\
\text{Wuhan, China }\\
\{hh\_yang, ppyuan, hjin\}@hust.edu.cn}
}

% VLDB作者信息
% \author{Huihui Yang, Pingpeng Yuan, Hai Jin}
% \affiliation{%
%   \institution{National Engineering Research Center for Big Data Technology and System}
%   \institution{Cluster and Grid Computing Lab}
%   \institution{Services Computing Technology and System Lab}
%   \institution{School of Computer Science \& Technology}
%   \institution{Huazhong University of Science and Technology}
%   \city{Wuhan}
%   \country{China}
% }
% \email{{hh\_yang, ppyuan, hjin}@hust.edu.cn}

%作者信息-KDD格式
\iffalse
\author{Huihui Yang, Pingpeng Yuan, Hai Jin}
\email{hh\_yang@hust.edu.cn, ppyuan@hust.edu.cn, hjin@hust.edu.cn}
\affiliation{%
  \institution{National Engineering Research Center for Big Data Technology and System/Service Computing Technology and System Laboratory/Cluster and Grid Computing Laboratory/School of Computer Science and Technology, Huazhong University of Science and Technology}
  \city{Wuhan}
  \country{CA}
  \postcode{430074}
}
\fi

\begin{abstract}
Reachability queries ask whether there exists a path from the source vertex to the target vertex on a graph. Recently, several powerful reachability queries, such as Label-Constrained Reachability (LCR) queries and Regular Path Queries (RPQ), have been proposed for emerging complex edge-labeled digraphs. However, they cannot allow users to describe complex query requirements by composing query patterns. Here, we introduce composite patterns, a logical expression of patterns that can express complex constraints on the set of labels. Based on pattern, we propose pattern-constrained reachability queries (PCR queries). However, answering PCR queries is NP-hard. Thus, to improve the performance to answer PCR queries, we build a two-dimensional reachability (TDR for short) index which consists of a multi-way index (horizontal dimension) and a path index (vertical dimension). Because the number of combinations of both labels and vertices is exponential, it is very expensive to build full indices that contain all the reachability information. Thus, the reachable vertices of a vertex are decomposed into blocks, each of which is hashed into the horizontal dimension index and the vertical dimension index, respectively. The indices in the horizontal dimension and the vertical dimension serve as a global filter and a local filter, respectively, to prune the search space. 
Experimental results demonstrate that our index size and indexing time outperform the state-of-the-art label-constrained reachability indexing technique on 16 real datasets. TDR can efficiently answer pattern-constrained reachability queries, including label-constrained reachability queries.
\end{abstract}

\begin{IEEEkeywords}
reachability query, pattern-constrained reachability, index, hash 
\end{IEEEkeywords}

\section{Introduction}
\label{sec:intro}
In recent years, graphs as a nonlinear data structure that can represent multiple complex relationships between entities have become increasingly popular and have gained widespread use in practical applications. 
Reachability queries, one of the fundamental operations on graphs, have received extensive attention due to their numerous applications in a wide range of fields, including event analysis \cite{DRS}, trajectory discovery \cite{SOIT}, clustering methodologies, and influence maximization \cite{KIM2017217}.
% Reachability queries, one of the most fundamental operations of graphs, are to check whether there is a path in the graph that allows one vertex to reach another vertex, which have been widely studied. 
%
%There are many researches on reachability on large graphs now, but 
% One way to process reachability queries is to explore all the paths in the graph from the source to the target until the answer can be given definitely. Consequently, the way is costly in large graphs. Index-based reachability query answering methods, for example 3-hop \cite{2009'3-hop}, TF-label \cite{2013'TFLabel}, GRAIL \cite{2010'GRAIL,2012'GRAIL}, BFL \cite{2017'BFL}, IP \cite{2018'IP,2014'IP} and MGTag \cite{2018'MGTag,2022'MGTag} construct the full or partial indexes for reachable information so that most reachability queries can be answered using the indexes.

Most of the existing methods for reachability queries are limited to unlabeled graphs and focus on determining the existence of a path between two vertices. This line of research has been extensively explored over the past decades, yielding a rich body of work on unlabeled graph reachability \cite{2006'Dual, 2009'3-hop, 2011'Path-tree, chen'2008'chain, 2005'HLSS, 2005'HOPI, 2010'GRAIL, 2012'GRAIL, 2007'GRIPP, 2013'ferrari, 2014'IP, 2018'IP, 2013'TFLabel, 1989'TCC, 2017'BFL}. However, in many graphs, such as social networks, transport networks, and biological networks, edges are often assigned labels to denote various types of relationship between vertices. 

To leverage this rich semantic information, reachability queries have been extended to incorporate constraints on the labels of the edges along a path. A prominent paradigm for this is the \textbf{R}egular \textbf{P}ath \textbf{Q}uery (RPQ) \cite{Fan'2011}. An RPQ between a source vertex $u$ and a target vertex $v$ requires that the label sequence of the path from $u$ to $v$ satisfy a given regular expression. Consider the transportation network in \figurename~\ref{fig:motivation}, where vertices \textbf{A-F} represent geographical locations and edges are labeled with transportation modes (e.g., "\textit{rail}", "\textit{plane}", "\textit{bus}", "\textit{ferry}", "\textit{car}"). An RPQ from \textbf{A} to \textbf{D} with the constraint "$car^+ferry^+$" requires that a path contains both "\textit{car}" and "\textit{ferry}" labels, and the labels "\textit{car}" must appear before the labels "\textit{ferry}" in sequence. The path \textbf{A} $\to$ \textbf{C} $\to$ \textbf{F} $\to$ \textbf{D} with the label sequence [car, ferry, ferry] satisfies this constraint.
A fundamental and widely-studied special case of the RPQ is the \textbf{L}abel-\textbf{C}onstrained \textbf{R}eachability (LCR) query \cite{2017'LI,2020'P2H,2023'PDU,2022'P2H}. Instead of a complex regular expression, an LCR query specifies a set of allowed labels, and a valid path must use only labels from this set. For instance, in the same graph, an LCR query from \textbf{A} to \textbf{D} with the allowed label set $\{$"\textit{car}", "\textit{rail}"$\}$ would be satisfied by the path \textbf{A} $\to$ \textbf{C} $\to$ \textbf{E} $\to$ \textbf{D}, since all edges on this path are labeled with "\textit{car}".

\begin{figure}[!ht]
    \centering
    \includegraphics[width=\linewidth]{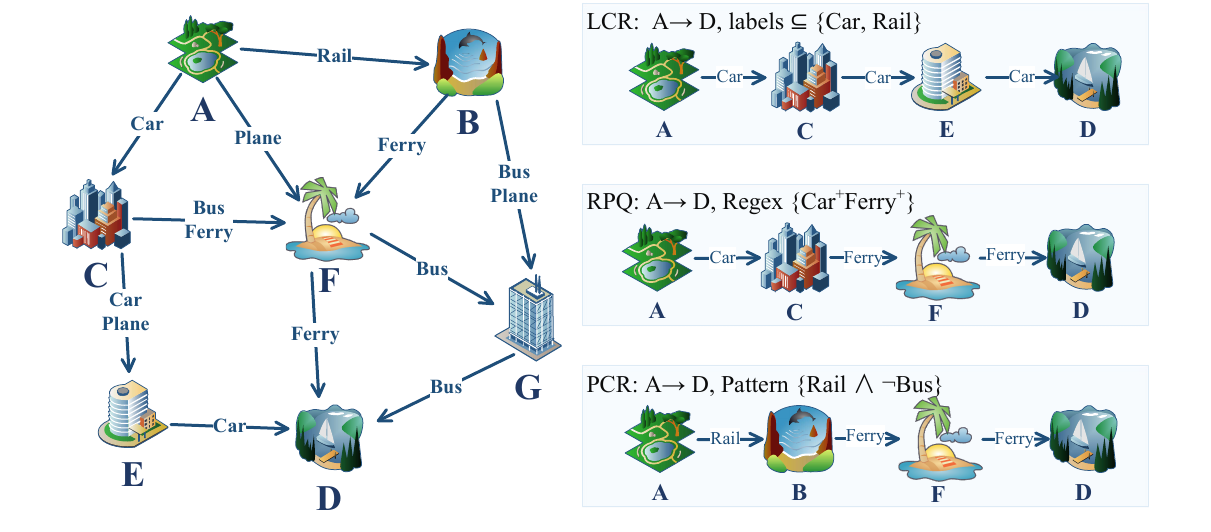}
    \caption{An illustrative example of three types of reachability queries on a graph of transportation network.}
    \label{fig:motivation}
    \vspace{-2mm}
\end{figure}

% \textbf{Transportation network}. 
However, the expressive capabilities of both RPQ and LCR are often insufficient to model the complex, composite constraints required by many real-world applications. 
For instance, a traveler wishes to plan a journey from city \textbf{A} to city \textbf{D} in \figurename~\ref{fig:motivation} under the following constraints: the path must include a \textit{rail} segment for efficiency, but must exclude any \textit{bus} segment due to motion sickness. 
Formulating this requirement is challenging for existing paradigms. An RPQ excels at specifying sequential patterns but lacks native operators to express global negation. Conversely, an LCR query can only define a set of allowed labels but cannot enforce the mandatory presence of a specific label like \textit{rail} within that set. Consequently, neither RPQ nor LCR can succinctly encode this real-world constraint.
To address these expressiveness limitations, this paper introduces a novel \textbf{P}attern-\textbf{C}onstrained \textbf{R}eachability (PCR) query framework. In PCR, path constraints are formulated as propositional logic expressions over edge labels, seamlessly integrating logical conjunction ($\wedge$), disjunction ($\vee$), and negation ($\neg$). The aforementioned travel query can thus be precisely and intuitively expressed as \{ \textit{rail} $\wedge$ $\neg$\textit{bus} \}, directly mirroring the user's intent.

In PCR queries, the combinations of labels and vertices grows exponentially. Consequently, constructing indices that contain full reachability information tailored to the query pattern is costly. This challenge is exacerbated by the expansion of data resulting in larger graphs with multi-labeled edges, complicating the construction of efficient indices to resolve pattern-constrained reachability queries. Particularly in sparse graphs, the high cost of building full indices does not yield better query performance.
% Therefore, we must carefully balance between the cost to build an index and query-answering overhead. 
% Here, we model real networks as the multi-graphs where there exist multiple labeled edges between two vertices. 
% In order to answer PCR queries efficiently while minimizing indexing overhead, we build a \textit{\textbf{T}wo-\textbf{D}imensional \textbf{R}eachability (TDR)} index which consists of a multi-way index and a path index. We first map the reachable vertices of a vertex into multiple hash spaces in horizontal dimension. Then the vertices and labels along the path originating from the vertex are mapped into hash spaces of vertical dimension. Finally, an efficient algorithm is designed to quickly answer PCR queries using this index. 
To optimize performance, it is crucial to strike a balance between the cost to build an index and query-answering overhead. Consequently, we introduce a lightweight \textit{\textbf{T}wo-\textbf{D}imensional \textbf{R}eachability (TDR)} index and then design an efficient algorithm  to quickly answer PCR queries.

%The three main costs of reachability query are index construction time, index size, and query time. All the algorithms are designed to balance these three costs. Research on unlabeled graphs has been relatively mature and there are many related algorithms, which can be divided into two main categories overall. 

The contributions of this paper are summarized as follows:
\begin{itemize}
\item \textbf{Pattern-Constrained Reachability Query}.  
%We first model the real network, where each edge has several labels, as a multi-level graph in which edges in a level of the graph share the same label. Then, 
We introduce the \textbf{P}attern-\textbf{C}onstrained \textbf{R}eachability (PCR) query, allowing users to define composite patterns using logical operators such as $\mathbb{AND}$, $\mathbb{OR}$. With composite patterns, users can specify more flexible and expressive patterns that differ from the rigid constraints of regular expressions on a solution path (RPQ), or are not merely limited to a set of labels (LCR queries). 
% PCR also introduces two operations for evaluating whether the candidate solutions match given patterns. With PCR, users can state their query intentions more accurately. 
% Unlike LCR queries, PCR queries return solution paths on which a subset of labels satisfy the given pattern. 
We prove that answering PCR queries is an NP-hard problem. As far as we know, we are the first to propose a pattern-constrained reachability query.
%\item We model real networks where each edge may be attached several labels, as a multi-level graph in which edges in a level of the graph share same label. So, answering a reachability query with pattern can be transferred into answering the query on corresponding levels of the graph. 
%\item Unlike all methods available now, we are the first to apply randomness to index construction on labeled graphs and answer reachability queries efficiently.
\item \textbf{Two-Dimensional Reachability Index}. We propose a two-dimensional reachability index which is built for each vertex to track all vertices it can reach and labels on the paths from the vertex. Given that a vertex typically has a large number of reachable vertices, its reachable vertices are decomposed into multiple independent blocks. This way allows for pruning the entire block out when the index of the block shows that it does not contain solutions, thus making the search space more manageable. Each block is then assigned to the horizontal and vertical dimensions of the TDR index, respectively. The horizontal dimension serves as a global filter, while the vertical dimension index prunes the search space according to local information. 
% Furthermore, we hash a cluster of fixed-length paths into an output space. With the index, we can check forward vertices instead of traversing them actually. 
%Computed with linear time complexity and occupying linear index space, these indices can significantly help answering complex reachability queries while they do not require much time to build.
%\item To deal with reachability of labeled graphs with cycles, we add labels to record all labels on all paths in the cycle, this process can be done in linear time and the index space is also linear.
\item \textbf{Efficiency}. We conducted extensive experiments to compare our method with existing methods on a range of real and synthetic datasets. The results indicate that our method substantially decreases the time to answer PCR queries and can also effectively address the LCR queries.
\end{itemize}

\section{Related Work}\label{sec:relatedwork}
Early research efforts \cite{2006'Dual, 2009'3-hop, 2011'Path-tree, chen'2008'chain, 2005'HLSS, 2005'HOPI, 2010'GRAIL, 2012'GRAIL, 2007'GRIPP, 2013'ferrari, 2014'IP, 2018'IP, 2013'TFLabel, 1989'TCC, 2017'BFL} focus mainly on unlabeled graph. Since many real-world graphs are labeled on edges, recent efforts explore two kinds of reachability queries with label constraints: RPQ and LCR (Detailed analysis in Appendix \ref{appendix:relatedwork}). % In this section, we provide a brief summary of existing algorithms and discuss the state-of-the-art relevant to our work.
 
\textbf{Regular Path Query.} 
A Regular Path Query (RPQ) specifies that the labels of any solution paths must satisfy a given regular expression, where operators include union, concatenation, and Kleene closure \cite{2021'RQuBE}. 
RL \cite{2012'RPQonLG} answers RPQs by decomposing an RPQ into several smaller RPQs using rare labels. % and can not work in all cases (e.g., labels with similar frequency). 
ARRIVAL \cite{2019'ARRIVAL} samples a number of paths through bi-directional random walk. If there exists any sampling path between two vertices, the two vertices are reachable, otherwise unreachable. So, the answer may be false-negative. %The ARRIVAL algorithm is used to answer regular path queries on large tag networks. It can also answer LCR queries.
Similarly, the example-based regular path query (RQuBE) \cite{2021'RQuBE} also employs a sampling-based method to build a candidate vertex set and return top-k vertices based on their confidence values as the final result set. %Furthermore, RQuBE can automatically infer regular expressions from user-provided exemplars, making it user-friendly for individuals with limited knowledge on regular expression. 
Unlike previous sampling algorithms, D. Arroyuelo et al. \cite{2022'Ring} proposed a new algorithm that combines bit-parallel simulation and the ring index to process automaton states synchronously. In addition to generating vertices pairs from the constraints of regular expressions, there is some other work on regular path queries. For example, A. Pacaci et al. \cite{2020'Streaming} determine whether a given constraint is satisfied between two concrete entities over streaming graphs. Recent RLC queries \cite{2023'RLC} require solution paths consisting of one or more concatenations of the given sequence. %To answer , and introduced corresponding RLC indexes to process such queries.
% , then it will automatically derive the optimal regular expression. Then restart the random walk to identify center with high degrees of the vertices and extract their d-hop neighborhood to build the candidate set C. At last, by iteration path sampling of C vertex sorting and return k vertices with the highest degree of confidence as the final result set.
%Both concatenated RQuBE, however, can only provide an approximate result.

\textbf{Label-constrained Reachability Query (LCR).} LCR query restricts the labels on the solution paths to only those within a given label set. %So, it can be described by the regular expression ${({l_1} \cup {l_2} \cup \dots \cup{l_k})^*}$, for constrained label set ${L = \{{l_1}, {l_2}, \dots, {l_k}\}}$, where $\cup$ is disjunction and $*$ is the Kleene star. 
%As one of the most common queries on reachability in labeled graphs, several algorithms have also been proposed to address this problem. 
LI+ \cite{2017'LI} answers LCR queries by precomputing  pairs of vertex that can be reached via the landmarks. Then, LI+ canthe stored information. Y. Chen et al. \cite{Chen'2021'Recurve} propose an algorithm that recursively decomposes the input graph while transforming the query into a series of subqueries to answer LCR queries. 
P2H+ \cite{2020'P2H,2022'P2H} stores all vertices that each vertex $u$ can reach or can reach $u$, along with the minimal set of labels on the path between $u$ and each reachable vertex. %Similarly, $In(u)$ stores the corresponding information for the vertices that can reach $u$.
%To further improve the performance, the algorithm designs a high-level BFS search order and index entry construction order of vertices, namely P2H+. 
Since P2H+ requires huge storage, Y. Cai et al. \cite{2023'PDU} improve P2H+ by reducing the index size for vertices with one degree and eliminating unreachable queries without label constraints.%, denoted as PDU (P2H+DOR+UQF). %PDU One is to design an algorithm, in which the vertices with an out-degree or an in-degree are specially processed, so as to reduce the space cost of the index. The other is to propose a lightweight algorithm to rule out unreachable queries without any label constraints, so as to enhance the pruning ability.
% One is to deal specifically with vertices whose in-degree or out-degree is one, and the other is to introduce a lightweight index for answering the case of unreachable vertices. 
%While these techniques does optimize 
Since existing algorithms for LCR build complete indices, they offer the best performance on queries. However, their indexing costs (e.g. index size and indexing time) are relatively high, making them impractical for construction on large graphs. 

% They do not perform well when answering unreachable queries since they are designed for positive reachability queries. In real graphs, most of the vertex pairs are not unreachable. Furthermore, when the reachability query allows more complex constraints, it is very costly to build a full index to answer such queries. 
% as much as possible during the query to obtain a shortcut from the landmark to the target vertex. %To further improve its practical performance, the algorithm also uses non-landmark indexing and pruning to speed up pseudo queries. The main drawback of this approach is that it only makes a trade-off with Landmark. For FULL-LI, the index cannot be constructed on most large graphs, and for LI+, the worst case is that the false-query due to partial indexes is even worse than DBFS.

%LCR queries restrict the reachability paths to only the edges that have labels in the given set of labels. RPQ imposes strict requirements on the label sequence based on the constraints of regular expressions. However, 
In real-world scenarios, users may expect the logical combinations of label sets (or patterns) rather than a label set (LCR) or regular expressions (RPQ). Therefore, we propose composite patterns using logical operators that relaxes these label constraints for more flexibility. 
% However, in real-world scenarios, users may expect more combinations of label sets (or patterns) than a single set of labels. For example, when a user plans a travel to a city, she/he may expect that some cities must appear in the itinerary while others are optional or excluded. 
% Regular expressions can achieve this kind of query, but it is too complex for non-expert users. The query can also be expressed as a composite pattern, which is a combination of label sets (i.e., cities) using logical operators. We call such query as pattern-constrained reachability query. 
% Here, we define the and propose our solution to answer PCR queries. 
In contrast to previous approaches, we construct a particle index answering whether a given vertex pair is reachable under a specified pattern.
%In practice, the user's constraint on labels may not be limited to the label on the query path being a subset of the given label set, but may also be required that the given label set be a subset of the query path, or only the part of the given label set appears on the query path. Of course, this requirement is clearly expressed using regular expressions. However, for non-professional users, it is difficult to use regular expressions. Therefore, in this paper, we set up several patterns for matching inclusion relationships between labels for common requirements.

\section{Preliminary}\label{sec:preliminary}
% In this section, we first give relevant concepts and then define our problem, that is, PCR queries on edge-labeled digraphs.

\subsection{Concepts and Definitions}
A real-world directed graph may be a multi-graph, where multiple relationships, each denoted by a unique label, can exist between any two entities. To facilitate a clear and unambiguous representation, we treat each distinct label as a separate edge. For example, the edge with multiple labels in \figurename~\ref{fig:example-graph}, $v_0 \xrightarrow{a,b} v_2$, is interpreted as two distinct edges between $v_0$ and $v_2$: one for the label '$a$' and another for the label '$b$'. 
% With these considerations in mind, we proceed to formally define the concept of an edge-labeled directed graph.

\begin{figure}[t]
    \centering
    \vspace{-1mm}
    \begin{subfigure}{0.5\linewidth}
        \centering
    	\includegraphics[width=\linewidth]{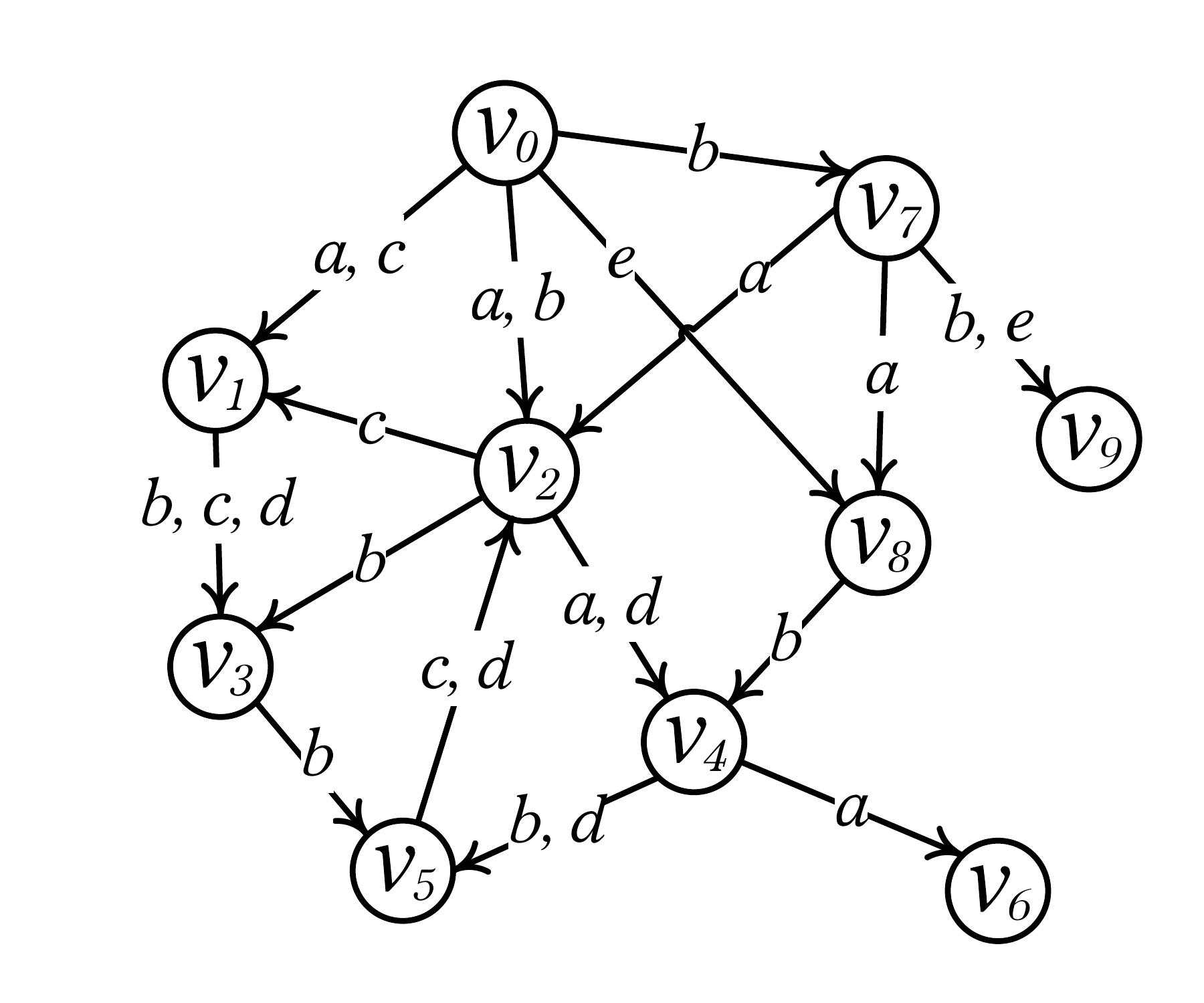}
        \vspace{-4mm}
        \caption{an edge-labeled digraph}
        \label{fig:example-graph}
    \end{subfigure}
    \begin{subfigure}{0.49\linewidth}
        \centering
    	\includegraphics[width=\linewidth]{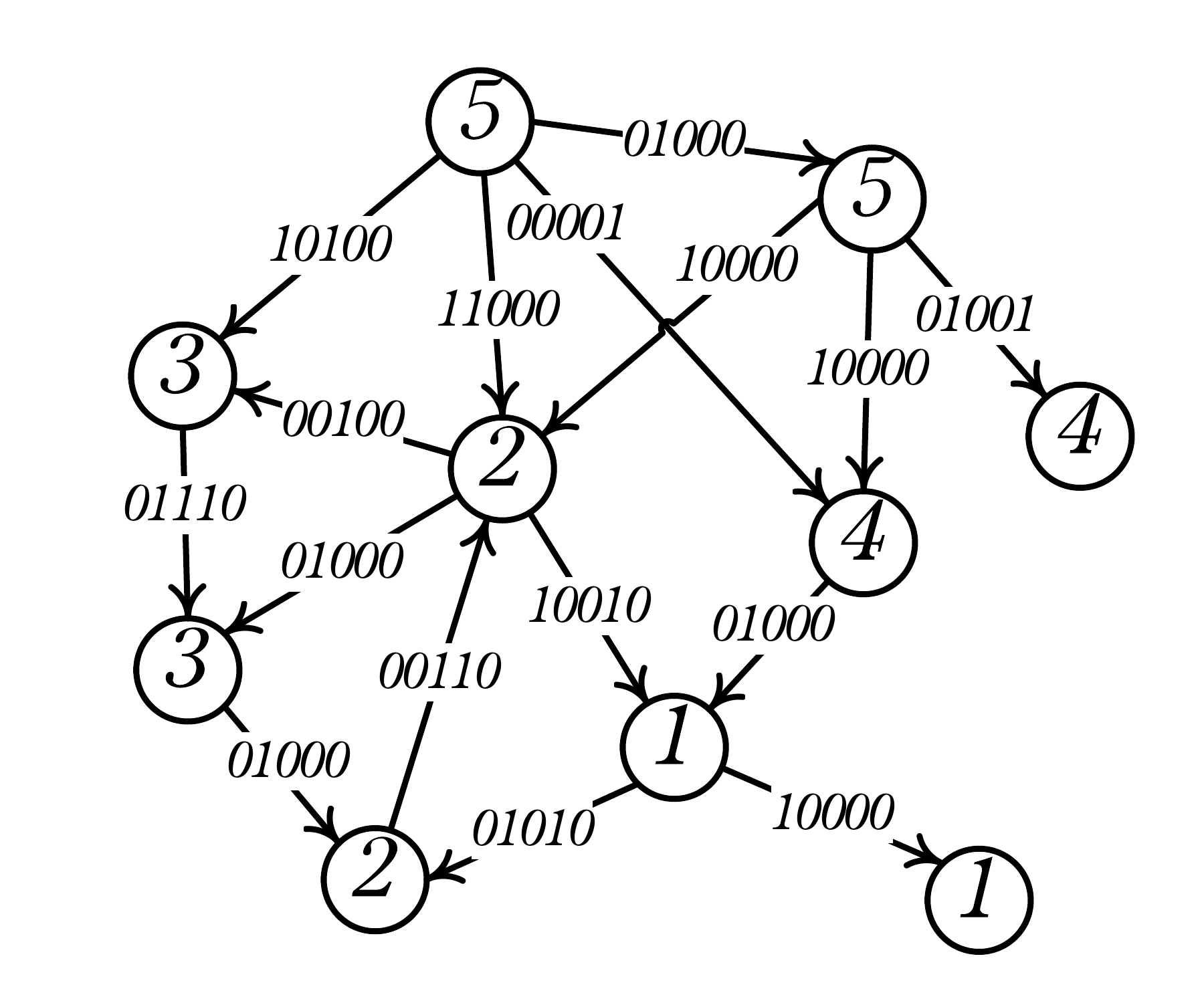}
        \vspace{-4mm}
        \caption{the hashed digraph}
        \label{fig:example-index}
    \end{subfigure}
    \vspace{-4mm}
    \caption{\small{An edge-labeled digraph with 10 vertices and 5 labels, and the digraph with vertices and labels hashed.}}
    \label{fig:example}
\end{figure}

\begin{mydef}[Edge-Labeled Digraph] An edge-labeled digraph is denoted as $G = (V, E, \zeta)$ where $V$, $E$ and $\zeta$ are the set of vertices, edges and labels, respectively. For each edge $e = \langle u, v, l \rangle \in E$, $u \in V$ is the source vertex, $v \in V$ is the target vertex, and $l \in \zeta$ is the label associated with the edge. 
% Function $\lambda: e \rightarrow l$ maps each edge to its corresponding label, expressed as ${\lambda(\langle u, v, l\rangle) =l}$.
\end{mydef}

%There may be multiple relationships between two entities. Consequently, multiple edges with labels that are independent of each other are permitted to exist between two vertices in an edge-labeled graph.
Here, the number of vertices, edges, and labels in the graph $G$ are denoted as ${|V|}$, ${|E|}$, and ${|\zeta|}$, respectively. 
Let ${Suc(u) = \{ v | \langle u, v, l \rangle \in E \}}$,  ${Pre(u) = \{ v | \langle v, u, l \rangle \in E \}}$. If $Pre(u)$ is empty, then $u$ is the \textit{root vertex} (e.g., vertex $v_0$). Similarly, $u$ is the \textit{leaf vertex} if $Suc(u)$ is empty (e.g., vertex $v_6$). 

%Next, we give relevant definition of the path.
% \begin{mydef}[Path]\label{def:path} A path is a sequence of vertices and edges where edges connect each other, i.e., $p: u_0, e_0, u_1$, $\dots, u_i, e_i, u_{i+1}$,$\dots, u_m, e_m, u_{m+1}$ where $e_i: (u_i, u_{i+1}) \in E$ $(0 \leq i < m+1)$. We say $u_0 \leadsto u_{m+1}$. Let $L(p)$ be one label sequence of $p$, namely, $L(p): \lambda(e_0)$, $\dots, \lambda(e_i)$,$\dots, \lambda(e_m)$. We denote it as $u_0 \stackrel{L(p)}{\leadsto} u_{m+1}$.
% \end{mydef}
\begin{mydef}[Reachability]\label{def:reach} 
A path consists of a sequence of vertices and edges where the edges connect with each other, that is, $p: u_0$, $e_0$, $u_1$, $\dots$, $u_i$, $e_i$, $u_{i+1}$, $\dots$, $u_{m-1}$, $e_{m-1}$, $u_{m}$ where $e_i= \langle u_i, u_{i+1}, l_i \rangle \in E$ ($0\leq i<m$). 
We use $u_0 \stackrel{p}{\leadsto} u_{m}$ to denote that $u_0$ can reach \textbf{topologically} $u_{m}$ by the path $p$. 
Let $L(p)$ = $l_0l_1\dots l_i$ $l_{i+1} \dots l_{m-1}$ record the sequence of labels in $p$. 
$u_0$ can reach $u_{m}$ with label sequences $L(p)$, denoted as $u_0 \stackrel{L(p)}{\leadsto} u_{m}$. $u_0$ may reach $u_{m}$ by several paths. Let $\mathscr{L}(u_0 \leadsto u_{m})$ be the set of label sequences on the paths from $u_0$ to $u_{m}$, namely, $\mathscr{L}(u_0 \leadsto u_{m}) =\{L(p)|u_0 \stackrel{p}{\leadsto} u_{m}\}$. Similarly, we denote it by $ \xymatrix @C=5em {u_0\ar@{~>}[r]^{\mathscr{L}(u_0 \leadsto u_{m})} & u_{m}}$.
\end{mydef}

All paths from a vertex $u$ to the leaf vertices consist of a traversal tree rooted in $u$, denoted $T(u)$. Let \textbf{$V_{out}(u)$} denote the set of vertices that $u$ can reach, while \textbf{$V_{in}(u)$} is the set of vertices that can reach $u$. 
In some cases, reachability queries may not care about the order of labels on paths. So, we define the function $S()$ to return the set of labels corresponding to $L(p)$, that is, $S(L(p))=\{l|l\in L(p)\}$. Similarly, \textbf{$L_{out}(u)$} denote the set of labels on the paths by which vertices $u$ can reach, and \textbf{$L_{in}(u)$} is the set of labels on the paths to the target vertex $u$. That is, $L_{out}(u)=\{l|l \in S(\mathscr{L}(u \leadsto u_{i})), u_i \in V_{out}(u)\}$ and $L_{in}(u)=\{l|l \in S(\mathscr{L}(u_i \leadsto u)), u_i \in V_{in}(u)\}$.
\iffalse
\begin{myeg}
In \figurename~ \ref{fig:example}, for the path $p:7\xrightarrow{e}9\xrightarrow{b}13$, $L(p)=\{eb\}$ and $S(L(p))=\{b,e\}$. There are three paths between the vertex $7$ and $13$, so $S(\mathscr{L}(7 \leadsto 13))$=$\{b,d,e\}$. For vertex $10$, we observe that $V_{out}(10)=\{10,13,14,15,16\}$ and $L_{out}(10)=\{c,d\}$.
\end{myeg}
\fi
%
\subsection{Problem Definition}
%Generally, reachability queries  \cite{2020'P2H,2022'P2H,Fan'2011} on labeled graph specify the constraints that limit labels on solution paths. However, users may expect more powerful means to describe their query constraints, such as composite constraints based on logical operators. 
Here, we introduce composite patterns to help users define complex query constraints. 

\begin{mydef}[Pattern]
    A pattern $\mathcal{P}$ is a well-formed formula defined inductively over $\zeta$ as follows:
    \begin{enumerate}
        \item Atomic Pattern: If $l \in \zeta$, then both $l$ and $\neg l$ are well-formed atomic patterns. The former requires the presence of the label $l$, while the latter requires its absence. 
        \item Combination: If $\mathcal{P}_1$ and $\mathcal{P}_2$ are well-formed patterns, then $(\mathcal{P}_1)$ (Parenthesization), $\mathcal{P}_1 \wedge \mathcal{P}_2$ (Conjunction) and $\mathcal{P}_1 \vee \mathcal{P}_2$ (Disjunction) are also well-formed patterns. $\mathcal{P}_1 \wedge \mathcal{P}_2$ requires that both $\mathcal{P}_1$ and $\mathcal{P}_2$ be satisfied, while $\mathcal{P}_1 \vee \mathcal{P}_2$ requires that at least one of them be satisfied.
        \item Closure: All well-formed patterns are generated by finitely applying the rules (1) and (2).
    \end{enumerate}
\end{mydef}

\iffalse
\begin{mydef}[Pattern]\label{def:ptn} %A pattern articulates the user's desired search criteria within a path connecting two vertices. 
A pattern $\mathcal{P}$ is defined as follows:
$\mathcal{P} ::= \mathcal{P}~\mathbb{AND}~\mathcal{P}~|~\mathcal{P}~\mathbb{OR}~\mathcal{P} ~|~\mathbb{NOT}~\mathcal{P}~|~l$, 
% $\mathcal{P}~::=~\left \langle\mathcal{P}\right\rangle ~\mathbb{AND}~\left \langle\mathcal{P}\right\rangle~|~\left \langle\mathcal{P}\right\rangle~\mathbb{OR}~\left \langle\mathcal{P}\right\rangle~|~\mathbb{NOT}~~\left \langle\mathcal{P}\right\rangle~|~l$, 
%$\mathcal{P}~::=(\mathbb{AND}|~\mathbb{OR}|~\mathbb{NOT})(\left \langle\mathcal{P}\right\rangle ~|~lset)$, 
where $l \in \zeta$. The pattern is defined recursively, similar to logical expressions, where the functions of $\mathbb{AND}$, $\mathbb{OR}$, and $\mathbb{NOT}$ correspond to logical operators.  
\end{mydef}
\fi

If pattern $\mathcal{P}$ has only one label $l$, the constraint is met when $l$ appears.  Essentially, the presence of $l$ corresponds to \textbf{true} value of the logical expression. Consequently, for $\mathcal{P}$ with multiple labels interconnected by logical operators $\mathbb{AND}$, $\mathbb{OR}$, and $\mathbb{NOT}$, the constraint is satisfied by identifying a set of labels that makes the logical expression true. For example, given $\mathcal{P}=(l_1~\mathbb{AND}~l_2)~\mathbb{OR}~(\mathbb{NOT}~l_3)$, the pattern constraint is satisfied if either (1) both $l_1$ and $l_2$ are present or (2) $l_3$ is absent. Based on patterns, we define Pattern-Constrained Reachability Queries that allow users to specify complex constraints on queries.

% Let $match(\mathcal{P},p)$ return the set of labels and/or sequences on path $p$ matching pattern $\mathcal{P}$. 

% We extend the original $IN$ query \cite{P2H'2020} by allowing complex patterns instead of a label set. Some reachability queries need to know whether a label set of a path between $u$ and $v$ contains the label set %or sequences
% matching the given pattern. The queries are named as $CT$ query. The two kinds of queries depend on two operations: textit{CT} and \textit{IN} to evaluate each path between two vertices.
\iffalse
\begin{mydef}[Operations]\label{def:opr}
$u,v\in V$, $u\stackrel{p}{\leadsto}v$, function $CT (\mathcal{P},~p)$ tests whether $L(p)$ %or $S(L(p))$
contains labels specified by the given $\mathcal{P}$, i.e. $\mathcal{P}\subseteq match(\mathcal{P},p)$.

Function $IN (\mathcal{P},~p)$ returns true when only labels matching $\mathcal{P}$ appear in $S(L(p))$ of path $p$, i.e. $S(L(p))\subseteq match(\mathcal{P},p)$. %$IN$ query can be also described by function $CT$ although the  patterns are more complex.
\end{mydef}

\begin{mydef}[Pattern-constrained Reachability Queries, PCR]
%In graph ${G=(V, E, \zeta, \lambda)}$, given the source vertex $u$, the target vertex $v$, and the label-constrained set $S$ ($S$ is a subset of $\zeta$), 
Reachability query is to determine if one vertex can reach another vertex or not under the given constraint. Given two vertices $u$, $v$, function $f$, and pattern $\mathcal{P}$, the reachability query is described as $ \xymatrix @C=2em {u\ar@{~>}[r]^?_{f,~\mathcal{P}} & v}$. If there exists a path $p$ between vertex $u$, $v$ and $f (\mathcal{P},p)$=true, then we say that vertex $u$ can reach vertex $v$ under the specified function $f$ and pattern $\mathcal{P}$. Otherwise, we say $u$ cannot reach $v$ with $\mathcal{P}$. 
\end{mydef}
\fi

\begin{mydef}[Pattern-Constrained Reachability Queries]
Given two vertices $u$, $v$, and a pattern $\mathcal{P}$, a \textbf{P}attern-\textbf{C}onstrained \textbf{R}eachability (PCR) query is to determine whether there exists a path from vertex $u$ to vertex $v$ such that the labels on the edges of the path satisfies the given pattern constraint $\mathcal{P}$, denoted as $ \xymatrix @C=1.5em {u\ar@{~>}[r]^?_{\mathcal{P}} & v}$. %Let $unite(\mathcal{P},p)$ return the set of labels on path $p$ matching pattern $\mathcal{P}$, and $match (\mathcal{P},p)$ is the result of pattern constraint. $match (\mathcal{P},p)$=true if:

%(1)the operator of $\mathcal{P}$ is $\mathbb{AND}$, and $unite(\mathcal{P},p)==lset$;

%(2)the operator of $\mathcal{P}$ is $\mathbb{OR}$, $unite(\mathcal{P},p)$ is not empty;

%(3)the operator of $\mathcal{P}$ is $\mathbb{NOT}$, $unite(\mathcal{P},p)$ is empty;
\end{mydef}

% We denote this problem by $PLCR(u, v, p, S)$ where $p$ represents the four patterns defined above. $PLCR(u, v, p, S)$ returns true if there exists a satisfying path between $u$ and $v$, denoted by ${u \stackrel{p, S}{\rightsquigarrow} v}$. Otherwise, $PLCR(u, v, p, S)$ returns false, denoted by ${u \stackrel{p, S}{\not \rightsquigarrow} v}$. 
% $0 \stackrel{?:b ~\mathbb{AND}~ f }{\xymatrix@C=4em {{}\ar@{~>}[r] & {}}} 7$ works
Unlike Regular Path Query (RPQ) \cite{Vertigo,RPQExample}, which defines the sequences of edge labels along solution paths, a PCR query identifies the set of labels present on solution paths. 
Let $match (\mathcal{P},\mathscr{L}(u \leadsto v))$ evaluate whether there exists a path between vertices $u$ and $v$ that conforms to the pattern constraints $\mathcal{P}$. If such a path $p$ exists, that is, if $S(L(p))$ satisfies the pattern $\mathcal{P}$, then $match (\mathcal{P},\mathscr{L}(u \leadsto v))$ = \textbf{true}. Otherwise, $match (\mathcal{P},\mathscr{L}(u \leadsto u))$ is \textbf{false}. 
Therefore, reachability queries $\xymatrix @C=1.5em {u\ar@{~>}[r]^?_{\mathcal{P}} & v}$ returns true if and only if both of the following conditions are met: $(a)$ $u \leadsto v$, i.e. topological reachability; $(b)$ $match (\mathcal{P},\mathscr{L}(u \leadsto v))$ = \textbf{true}, i.e. label reachability. 
In other words, when answering a reachability query, we can return false directly if topological or label reachability is false. %Therefore, we build indexes on vertices for topological reachability and on labels for constraint reachability.
%Label-constrained reachability (LCR) queries are a subset of PCR queries since the label set specified in an LCR query can be described using $\mathbb{AND}$ and $\mathbb{NOT}$. 
%Without loss of generality, assume that $\zeta=\{l_i\}_{i<n}$ and $lset$ specified in an LCR query is $\{l_i\}_{n-\|lset\|\leq i<n}$. So, 

\begin{myeg}
In \figurename~\ref{fig:example-graph}, $ \xymatrix @C=3.5em {v_0\ar@{~>}[r]^?_{b~\mathbb{AND}~d} & v_5}$ returns true because there exists a path $p:$ $v_0 \xrightarrow{a} v_1 \xrightarrow{d} v_3 \xrightarrow{b} v_5$ with $S(L(p))=\{a,b,d\}$ such that $match(\mathcal{P},\mathscr{L}(v_0 \leadsto v_5))$ = \textbf{true}. However, the solution for the PCR query $ \xymatrix @C=6em {v_0\ar@{~>}[r]^?_{\mathbb{NOT}(a~\mathbb{AND}~b)} & v_4}$ is false because no path exists between $v_0$ and $v_4$ that simultaneously excludes the labels $a$ and $b$.
\end{myeg}

It is known that any logical expression can be transformed into either the disjunctive normal form or the conjunctive normal form through a sequence of equivalent transformations. Likewise, a complex pattern can be decomposed into multiple sub-patterns connected by $\mathbb{OR}$, with each sub-pattern comprising labels connected by $\mathbb{AND}$. If each sub-pattern is denoted by a label, the original complex pattern can be represented through labels connected by $\mathbb{OR}$. For brevity, composite patterns that consist of labels, such as $l_0$ $\mathbb{AND}/\mathbb{OR}$ $l_1$ $\dots$ $\mathbb{AND}/\mathbb{OR}$ $l_m$ are denoted $\mathbb{AND}/\mathbb{OR}\{l_i\}_{0 \leq i \leq m}$. 
% Similarly, ($\mathbb{NOT} ~l_0$) $\mathbb{AND}/\mathbb{OR}$ ($\mathbb{NOT}~l_1$) $\dots$ $\mathbb{AND}/$ $\mathbb{OR}$ ($\mathbb{NOT}~l_m$) are denoted $\mathbb{AND}/\mathbb{OR}~(\mathbb{NOT} \{l_i\}_{0 \leq i \leq m})$. 
Therefore, for simplicity, here we only discuss the patterns where all logical operators are one of three logical operators, i.e. $\mathbb{AND}$, $\mathbb{OR}$, and $\mathbb{NOT}$. 
The logical operator of a pattern indicates whether the labels specified in its sub-patterns should be absent ($\mathbb{NOT}$) or present ($\mathbb{AND}$) in the solution path. 
For an $\mathbb{OR}$-pattern, if either of the subpatterns is met, the pattern is satisfied.

% LCR query  $LCR(lset)$ can be described as  $\langle \mathbb{NOT}~l_0\rangle~\mathbb{AND}~\langle \mathbb{NOT}~l_1\rangle~$ $\mathbb{AND} \cdots\mathbb{AND}~\langle \mathbb{NOT}~l_{n-\|lset\|-1}\rangle$. For brevity, let $\mathbb{NOT}(\zeta-lset)$ denote the representation of $LCR(lset)$ using patterns.
%LCR query is a special regular path query with more stringent constraints, which can get the answer in polynomial time \cite{2019'ARRIVAL} while .

We prove that answering PCR queries is NP-hard (Theorem \ref{thr:complexity} in Appendix \ref{appendix:problem}). One straightforward method to answer PCR queries is to exhaustively traverse the graph \cite{Fan'2011}. However, this approach faces the challenge of searching through a vast number of permutations/combinations of labels and vertices if the graph is large. To address this, building the index which saves the permutations of labels on paths is a fast approach to answering the queries \cite{2020'P2H,2022'P2H}. However, it is time-consuming and requires a lot of storage because the permutations/combinations of labels are huge. 
It motivates us to design an index that consumes minimal storage while maintaining fast performance. A pattern specifies a set of labels which should match the sequence of label set on a solution. 
% We can map labels as bit masks where specific bits indicate that labels should appear in a solution. 

% \begin{mydef}[Hypergraph] The hypergraph of multi edge-labeled digraph $G$ is denoted as $\mathcal{H}$=$(V, E, \zeta, \lambda)$ where $E \subseteq G.V$ and $\lambda$ is a function $\lambda: V\bigcup E \longrightarrow P(\zeta)$ that maps vertices or edges to labels.
% \end{mydef}

\section{Two-Dimensional Index}\label{sec:index}
Since PCR queries involve numerous combinations of labels, it is not feasible to maintain full reachable information for each vertex, as achieved with P2H+\cite{2020'P2H} and PDU\cite{2023'PDU}. Consequently, we opt to construct a partial index instead of a full index. 
There exist two challenges to build partial indices on large edge-labeled graphs. The first challenge involves efficiently pruning out non-viable branches. The subsequent challenge pertains to devising a compact index that can effectively scale to large graphs. 

\begin{figure}[htb]
    \centering   	
    \vspace{-2mm}
    \includegraphics[width=0.95\linewidth]{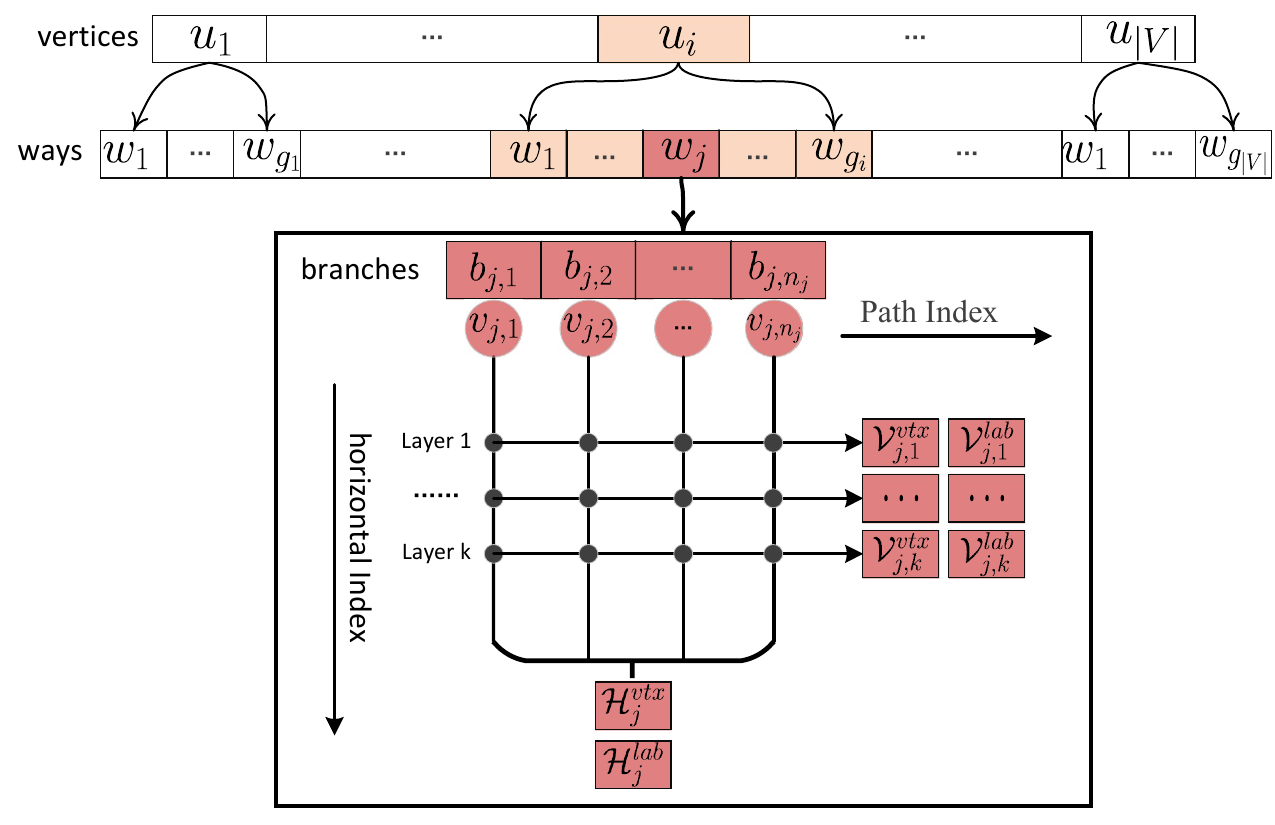} 	
    \vspace{-3mm}
    \caption{\small {The two-dimensional reachability index. The traversal tree starting from $u_i$ branches into $g_i$ distinct ways (denoted $w_1,\dots, w_{g_i}$), with each way (e.g. $w_j$) comprising one or more branches (e.g. $b_{j,1},\dots, b_{j,n_j}$) starting from neighbors of $u_i$ (e.g. $v_{j,1},\dots, v_{j,n_j}$). Each way is then projected onto both the horizontal and vertical dimensions, with the index in each dimension comprising two sub-indices for reachable vertex ($\mathcal{H}^{vtx}$,$\mathcal{V}^{vtx}$) and labels ($\mathcal{H}^{lab}$, $\mathcal{V}^{lab}$).}}
    \label{fig:frame}
    \vspace{-2mm}
\end{figure}

%\textbf{T}wo-\textbf{D}imensional \textbf{R}eachability Index (TDR) is built for each vertex to track all vertices and labels on the paths from the vertex. %We map vertices and labels to bit masks, in which way the result of bit operation indicates if the labels in the pattern appear in a solution. So, we transform the multi edge-labeled digraph $G$ into a multi-level digraph. 
To solve the first challenge, we construct \textbf{T}wo-\textbf{D}imensional \textbf{R}eachability Index (TDR) for each vertex.  
The horizontal dimension of TDR indexes the reachable vertices and the label sets on the paths to them (Section \ref{sec:HD}), while the vertical dimension stores the vertex and label sequences on the fixed length paths (Section \ref{sec:VD}). The indices are further decomposed into multiple independent ways such that the ways can be pruned out when the index on the ways indicate they are not solutions. For the second challenge, the reachable label set, and the reachable vertex set are mapped into bit sets, respectively, which are stored in specially designed structures for the horizontal index and vertical index. The TDR index for each vertex $u$ can be summarized as follows: We decompose the traversal tree $T$ rooted in $u$ into multiple groups, each of which contains one or more branches of $T$. Then we compute the horizontal and vertical projection of each group, respectively. Concretely, for each group, we project horizontally to obtain the set of reachable vertices of $u$ and the set of labels on the path. Then the group is projected to vertical dimension, so the sequences of reachable vertices and labels are obtained, respectively.
\figurename~ \ref{fig:frame} illustrates this process.
% The later is actually to merge all paths of that branch. 

%For constraint reachability, we try to take the same approach. However, the label set $\zeta$ is always small, and in this case, there are so many vertices whose label are equal to $\zeta$ that the pruning effect is extremely poor, even worse than DFS. For optimization, we add two index items on the existing index in both horizontal and vertical dimensions, which called the two-dimensional hash index. As shown in \ref{fig:frame}, horizontally, we group the successors of each vertex; Vertically, we store the labels that appear in the next few steps, both of which aim to reduce the number of label set for a vertex, and then improve the query efficiency.
%The code to implement DFS by recursion is more concise, but it always overflow the system call stack due to too many layers of recursion in large graphs. Therefore, in this paper, the recursive algorithm is converted into a non-recursive algorithm with the help of data stack in the realization of depth-first search. Another benefit of using a data stack is the ability to find cycles. 

\subsection{Multi-way Hashing for Horizontal Dimension}\label{sec:HD}
%To answer whether a vertex can reach another vertex, we save the reachable information in horizontal dimensional index which map all reachable vertices and labels on the paths from it to bit masks respectively. For a vertex that appears in multiple levels, we perform multi-way hashing horizontally by vertex to avoid redundancy. It is clear that 
If ${u \leadsto v}$, then $u$ can reach all vertices that $v$ can reach, and all vertices that can reach $u$ can also reach $v$, that is, ${V_{out}(v) \subseteq V_{out}(u)}$ and ${V_{in}(u) \subseteq V_{in}(v)}$. Similarly, considering labels on paths, we also have ${L_{out}(v) \subseteq L_{out}(u)}$ and ${L_{in}(u) \subseteq L_{in}(v)}$. Based on these propositions, TDR index is constructed for each vertex to track all vertices and labels on the paths from the vertex. With the index, when answering PCR queries, we can prune search space. %The inverse and negative proposition of this theorem, if ${Out(v) \not \subseteq Out(u)}$ or ${In(u) \not \subseteq In (v)}$, then ${u \not \leadsto v}$, is  correct as well. %In physical reachability, we mainly pruned according to this theorem.
% Similarly, if the set of all labels on all paths from vertex $u$ does not satisfy the given pattern $\mathcal{P}$, then there must not exist a path $p$ such that $p \in R_{u,v}(\mathcal{P})$. The two theorems also support our pruning strategy in theory. 
 %The implementation of each index is to use bit vectors and set the bits for reachable levels individually.

However, the reachable space is huge since the number of reachable vertices and labels of a vertex, particularly roots, may be large. When building reachability index for vertex $u$, we decompose the traversal tree that $u$ into multiple groups or ways such that each way can be handled independently. Thus, before building TDR, we must first determine the number of ways. It is not efficient for the traversal tree of each vertex to set the same number of ways. For example, the traversal trees of leaf vertices do not have branches, while root vertices generally have a large number of successors. Adopting the same number of ways for them may result in poor performance (more hash collisions or waste of storage). Thus, TDR dynamically generates the number of ways for each vertex based on its out-degree. To eliminate unnecessary indexing overhead, we do not build TDR indices for the vertices with zero out-degree because they cannot reach any vertices. %For the other vertices, the horizontal length of the multi-way hashing array is defined by their out-degree, which further reduces the index space. 

\begin{algorithm}[!htp]
    % \SetAlgoLined %显示end
    \caption{Build Indexes}
    \label{alg:index}
    \KwIn{a root vertex $u$}
    % \KwIn{a root vertex $u$ or any vertex $u$ on a cycle}
    % \KwOut{$G$ with indexes}
    \BlankLine
    $times\gets 0$, $s$ = $\emptyset$;     
    $s.push(u)$;
            
    \While{$s$ is not empty}
    {
        $w \gets s.top()$;
                
        \uIf {$v \in Suc(w)$ and $v$ is not visited }
        {
            $s.push(v)$;         
            $times \gets times+1$;
        
            record push time $\mathcal{I}_{push}(v) \gets times$;
        }                
        \Else
        {
            get the number of groups $g_w \gets hash(|Suc(w)|)$

            set the length of $\mathcal{H}^{vtx}(w)$ and $\mathcal{V}^{lab}(w)$ to $g_w$
            
            initialze $\mathcal{H}^{vtx}(w)$ to $hash(w)$
            
            \For {$v \in Suc(w)$}
            {    
            $i \gets v's~groupID$
            
            \textit{MultiWayHashing($w$, $v$, $i$)}
            
            \textit{PathHashing($w$,$v$, $i$)}
            }

            $s.pop()$
        
            $times \gets times+1$
        
            record pop time $\mathcal{I}_{pop}(w) \gets times$
        }
    } 

    \BlankLine
    \SetKwProg{PD}{Procedure}{:}{end}
        \iffalse
        \PD{MultiwayHashing($u$,$v$,$i$)}{

        initialize $\mathcal{H}^{lab}(u)[i] \gets hash(\lambda(\langle u, v, l \rangle))$

        % $\mathcal{H}lab_i(u) \gets \lambda(\langle u, v, \mathbb{L} \rangle)$

        % \uIf{$count==r$}
        % {
        %     $h_w \gets$ a new hash value

        %     $count \gets 0$
        % } 
        % \Else{
        %     $h_w \gets$ the latest hash value

        %     $count \gets count+1$
        % }
        % map $h_u$ to one bit of $\mathcal{N}_{out}_i(u)$

        \If{$v$ isn't a leaf vertex}
        {
            $\mathcal{H}^{lab}(u)[i] \gets \mathcal{H}^{lab}(u)[i]~|~\mathcal{H}^{lab}(v)$

            $\mathcal{H}^{vtx}(u)[i]\gets \mathcal{H}^{vtx}(u)[i] ~|~ \mathcal{H}^{vtx}(v)$
        }
    }

    \BlankLine
    \PD{PathHashing($u$,$v$,$i$)}{
        % initialize $\mathcal{V}_{u}[k]$ as 0
        \textbf{if} {$v~is~a~leaf~vertex$} \textbf{then} \textbf{return}
        
        set the length of $\mathcal{V}^{lab}(w)[i]$ to $k$

        initialize $\mathcal{V}^{lab}(w)[i]$ to $0$
        
        $\mathcal{V}^{lab}[0] \gets hash(\lambda(\langle u, v, l \rangle))$
        % $\mathcal{V}lab_0(u) \gets \lambda(\langle u, v, \mathbb{L} \rangle)$

        \For {$j \in [1, k)$}
        {
            $\mathcal{V}^{lab}(u)[i][j] \gets \mathcal{V}^{lab}(u)[i][j] ~|~ \mathcal{V}^{lab}(v)[*][j-1]$
        }   
    }
    \fi
\end{algorithm}

%Each vertex now waits for processing. 
After obtaining a suitable way number $g$ for each vertex $u$, we can first build the horizontal dimension index, which is a set of bit masks and serves as a global filter that prune out explicitly unreachable vertices. we map $L_{out}$ and $V_{out}$ to $n$-way bit masks on average using Bloom filters, denoted by $\mathcal{H}^{lab}$ and $\mathcal{H}^{vtx}$ respectively.
The number of vertices is generally much larger than the number of labels in large graphs. Thus, it is infeasible to build reachability index for all reachable vertices and labels on the paths for each vertex using the same methods. The bit masks that store the vertices along each way will be longer than those storing the labels. However, since the length of the bit mask is much smaller than the size of $V_{in}$ or $V_{out}$, it is challenging to design collision-free hash functions. Therefor, we employ multiple simple hash functions. Conflicts may arise when the number of vertices is greater than the size of the set. To reduce the amount of conflict, instead of hashing by ID, we attempt to hash consecutive vertices along the path to the same hash value. This approach helps reduce the likelihood of hash collisions. 
Since the reachable vertices of a child are a subset of the reachable vertices of its parent, the building index follows a bottom-up approach by traversing  all children one by one. This process is repeated until all branches are visited. When visiting the children of a vertex, the indices of the children are combined into its reachable index using the logical operator $or$. Finally, the vertex itself is mapped into the bit mask. %When the process works in reverse, we can build indices for the predecessors of a vertex and the corresponding label index. By this way, information (e.g. vertices and labels on path) of successors and predecessors of each vertex will be hashed into bit masks.

Algorithm~\ref{alg:index} constructs the horizontal dimension index using a bottom-up approach, initiating at the root vertex. It performs a deep traversal or backtracking by continually pushing (line 6) or popping from the stack (line 17). Each vertex is processed only after all its successors have been processed. When storing the information of all reachable vertices for a vertex $u$, the group ID of its successor is first determined according to a specified rule (line 12), and then all the bit masks in $\mathcal{H}^{vtx}$ of the successor are merged into the corresponding way of $u$ (line 13). In particular, the vertex itself will be hashed in each way (line 10). Similarly, the child label indices are also merged into its label index, including the edge labels between it and its children. 
The multi-way hash records the outgoing information of vertices. To acquire incoming relationship, we just reverse the traversal direction as delineated above. To avoid index redundancy, the number of ways for the reverse traversal is set to 1 and the labels are not saved. That is, we only need to hash $V_{in}$ to bit masks denoted as $\mathcal{N}_{in}$.

% Algorithm~\ref{alg:index} shows the process of mapping $V_{out}$ and $L_{out}$ to $\mathcal{H}vtx$ and $\mathcal{H}lab$, and $\mathcal{H}_{I_V}$ and $\mathcal{H}_{I_L}$ can be built in the same way.

%伪代码涉及到环消除吗？
%Another issue to consider is the existence of cycles, that is, a strongly connected component with more than one vertex. There is no loss of information in an unlabeled graph by equating a cycle to a point since there are no labels on the edges. Clearly, a cycle cannot be treated simply as a point on a labeled graph. To cope with this issue, we resort to the algorithms of Tarjan and Gabow, which are the fastest to find strongly connected components among existing algorithms, and then for each vertex on a cycle, we set the index about physical reachabability (i.e. interval, $\mathcal{N}_{in}$, $\mathcal{N}_{out}$) to the same value. In this way, the cycle is physically equivalent to a point, but the labels on the edges are not affected.

\subsection{Path Hashing for Vertical Dimension}\label{sec:VD}
%When answering reachable queries, we explore each branch at a vertex $u$ in a graph until the answer is known. 
Although the horizontal index can avoid unnecessary explorations by guiding the search, there exist some unnecessary traversals because the horizontal index is a global index, not a local index. For example, in \figurename~\ref{fig:example-graph}, for query $ \xymatrix @C=3.5em {v_0\ar@{~>}[r]^?_{\mathbb{NOT}\{b\}} & v_5}$ that $b$ must not appear in any solution paths, we need to search TDR until the index of vertex $v_3$ clearly indicates that there are no solutions through vertex $v_3$. To avoid these cases, we build the path index in the vertical dimension that contains only vertices within several hops. Different from the index of the horizontal dimension that globally filters the clearly unreachable vertices out, the index in the vertical dimension discard those branches according to several hops from the source vertex. 

The path index of each vertex stores the sequences of $k$ vertices and labels on fixed length paths, which share a common start vertex. Paths in a block may not be equal in length. Before building the path index, we must align multiple paths in a block so that each path is aligned with the other paths.  If the paths from a vertex (e.g. leaf vertices) have $m(m<k)$ elements, we will append the paths with $m-k$ virtual edges having null labels. This way facilitates the construction of the path index of a vertex from the path indices of its successors. Then, if the edges in the paths are equidistant from the start vertex, the edges are considered in a group, respectively. We merge the edge labels in the same group together. Labels on each level are represented using bit masks where each bit indicates whether the corresponding label appears in the level. 
The construction of the path index is also bottom-up. We use $k$-bit masks $\mathcal{V}^{lab}$ (\textit{PathHashing} of Algorithm~\ref{alg:index}) to store the first $k$-layers of all paths starting from $v$, $u$'s successors. After we build path indices for all successors of vertex $u$, then we can combine the path indices of its successors into its path index, and then add labels of its adjacent edges and neighbors into the top of the path index, respectively. Since the length of the path index is $k$, we slide the path index from top to bottom. Thus, the bottom elements are discarded. Thus, the length is still $k$ and the top elements are bit masks which correspond to labels between it and its successors. 
With the path index, we can prune some branches out without exploration if the branches do not match the pattern (Sec. \ref{sec:answerquery}). %For query $ \xymatrix @C=2em {7\ar@{~>}[r]^?_{\mathbb{NOT}\{e\}} & 5}$, TDR will not search the branch at vertex 2 because the path index at vertex 7 indicates that the branch is not satisfied. The path index can skip the check of label reachability (early stopping mechanism) if the label reachability of a query is satisfied.

\begin{myeg}   
For \figurename~\ref{fig:example-graph}, there are 10 vertices $V=\{v_0,v_1,...,v_{9}\}$ and 5 labels $\zeta=\{a,b,c,d,e\}$. 
Both vertex and label hashing use 5-bit masks. $h()$ is the hash function for vertices. Labels are mapped to bits, with a bit value of 1 indicating the corresponding label presence. \figurename~\ref{fig:example-index} depicts the directed graph after vertex and label hashing. 
% , assume that $h(v_4)$=$h(v_6)$=$1$, $h(v_2)$=$h(v_5)$=$2$, $h(v_1)$=$h(v_3)$=$3$, $h(v_8)$=$h(v_9)$=$4$, $h(v_0)$=$h(v_7)$=$5$. %For labels, each level is mapped to a bit. The corresponding bit is 1 if the label appears, otherwise it is 0. For example, $11000$ indicates that labels $a$ and $b$ are present. To distinguish with label hashing, vertices are denoted by a set of numbers, rather than bit masks.
% Here, we illustrate the process to build multi-way hashing and path hashing for vertex $v_7$. 
%We first extract the largest subgraph in which $7$ is the root vertex. 
Let $v_7$ be a case and suppose $g_7$=2 and $k=1$. For the first way, which consists of branches that start from vertices $v_2$ and $v_8$, the layer-1 path hashing index is $\mathcal{V}^{vtx}_{1,1}(v_7)=hash(\{v_2,v_8\})=\{2,4\}$ and $\mathcal{V}^{lab}_{1,1}(v_7)=10000$. Similarly, the hash arrays in the horizontal dimension are $\mathcal{H}^{lab}_{1}(v_7)=11110$ and $\mathcal{H}^{vtx}_{1}(v_7)=\{1,2,3,4\}$. The second way contains only a vertex $v_9$, so the indices in both the horizontal and vertical dimension are consistent, as shown in Table~\ref{tab:example}.
\end{myeg}

\begin{table}
\centering
\renewcommand\arraystretch{1}%调行距 
\caption{Index for \figurename~\ref{fig:example}}
\label{tab:example}
\resizebox{0.95\linewidth}{!}{
\begin{tabular}{l|c|c|c|c}
\hline
\textbf{VID}&\textbf{$\mathcal{H}^{vtx}$}&\textbf{$\mathcal{H}^{lab}$} & \textbf{$\mathcal{V}^{vtx}_{1},\mathcal{V}^{lab}_{1}$} & $[\mathcal{I}_{push},\mathcal{I}_{pop}]$\\
\hline 
\multirow{2}{*}{$v_0$} & $\{1,2,3\}$ & 11110 & $\{2,3\}$, 11100 & \multirow{2}{*}{[0,19]}\\
~ & $\{1,2,3,4,5\}$ & 11111 & $\{4,5\}$, 01001 & ~ \\
\hline
$v_4$ & $\{1,2\}$ & 11010 & $\{1,2\}$, 11010 & [5,8]\\
\hline
$v_6$ &   &   &   & [6,7]\\
\hline
\multirow{2}{*}{$v_7$} & $\{1,2,3,4\}$ & 11110 & $\{2,4\}$, 10000 & \multirow{2}{*}{[13,18]}\\
~ & $\{4\}$ & 01001 & $\{4\}$, 01001 & ~\\
\hline
$v_8$ & $\{1,2\}$ & 11010 & $\{1\}$, 01000 & [14,15]\\
\hline
\end{tabular}
}
\end{table}

%\textbf{Optimization}. 
%We also adopt the interval \cite{2010'GRAIL,2012'GRAIL} as part of our index. We use a pair of numbers to representing an interval $[\mathcal{I}_{push}(u), \mathcal{I}_{pop}(u)]$ for a vertex $u$ where $\mathcal{I}_{push}(u)$ represents the time $u$ is pushed into the stack (lines 7 to 8), and $\mathcal{I}_{pop}(u)$ represents the time popped (lines 18 to 19) in Algorithm~\ref{alg:index}. We start DFS  from the root vertex, which intuitively allows the interval to give more positive reachability.%, and this is very useful for pruning.

\section{Answering Reachability Queries}\label{sec:answerquery}
%In this section, we present our query algorithm in detail. Then, we show the query execution and give the algorithm in Section \ref{QE}. In Section \ref{sec:CA}, we analyze the time complexity of TDR.

%\subsection{Execution Plan}\label{EP}
%Processing queries with patterns is not straightforward because each sub-pattern has different size of matches. We must check all sub-patterns in order to return the correct result. So, different execution order of sub-patterns may lead to completely different performance. So, we need to rewrite the query so that traversal steps that are involved in fetching neighbors from the graph. Our system quickly generates optimal execution plan to execute the query.

%\subsection{Query Execution}\label{QE}
%Here, we analyze how to answer reachability queries using the labels computed above.
%Processing PCR queries is not straightforward because sub-patterns of a reachability query may correspond to different search space. %To return correct results, it is necessary to rewrite the query to include traversal steps for fetching neighbors from the graph.
%Therefore, different execution orders of the sub-pattern may lead to completely different performance. %However, the general idea of the query is similar for different sub-patterns.

When answering reachable queries, we exhaustively search all branches one by one and check if the labels on the path satisfy the pattern. We adopt a stack $s$ to remember the vertices that need to be explored (see Algorithm~\ref{alg:query}). 
% This approach traverses each vertex, the details of which are shown in Algorithm~\ref{alg:query}. We use a stack to recall the vertices that are waiting for exploration(Line 1). 
We start from the source vertex $u$ (line 2) and push new matching vertices into the stack after exploring the next vertex by checking their indices. To determine if the pattern matches the labels, each bit mask generated from the given pattern is evaluated. If the label matches the pattern, add it to the matched label set and recursively check if this traversal leads to a valid path from source to target vertex. If the edges chosen in the above steps do not lead to a valid solution, it will perform a backtracking and remove these edges and labels from the candidate path and try other alternative edges. If none of the alternatives work, then it returns $unreachable$. The previously added labels and edges in recursion will be removed. If the initial call of recursion returns $unreachable$ then the final answer is also $unreachable$. 
They traverse all possible paths and then terminate the process when the answer is known. So, the search complexity grows exponentially with increasing graph size. 
% This is inefficient because it is possible to eliminate many impossible traversals. 
This is inefficient because it is possible to perform many unnecessary traversals. To speed up the process, our approach adopts both block pruning, skipping, and early stopping to reduce search space. The block pruning reduces search space with multi-way index while the forward checking uses path index. 

\begin{algorithm}[htb]
    % \SetAlgoLined %显示end
    \caption{Answering Reachability Query}
    \label{alg:query}
    \KwIn{vertex $u$, $v$, and pattern $\mathcal{P}$}
    %\KwOut{the $answer$ of the query $ \xymatrix @C=1.5em {u\ar@{~>}[r]^?_{\mathcal{P}} & v}$ }
    %initialize an empty stack $s$
    
    $s.push(u)$; 
    $l_{ptn} \gets hash($the labels in $\mathcal{P})$; 
    $l_{path} \gets \emptyset$
            
    \While{$s$ is not empty}
    {
        $m \gets s.top()$

        \uIf{$m==v$}
        {
            \Return $reachable$
        }

        \uIf{\textit{VertexReach(m,v)} or \textit{LabelReach(m)} is false}
        {
            $s.pop()$, go back to the last vertex 
            
            remove the label $\lambda(\langle m, v, l \rangle)$ from $l_{path}$
        }
        % \ElseIf{\textit{LabelReach(m,v)} is false}
        % {
        %     $s.pop()$
            
        %     \textbf{continue}
        % }
        \For{each group $g_i$ of $m$}{
            \If{$\mathcal{N}_{out}(v) \subseteq \mathcal{H}^{vtx}(m)[i]$ and $\mathcal{H}^{lab}(m)[i]$ match $l_{ptn}$}
            {
                \If{$\omega \in g_i$ is not visited}
                {
                    $s.push(\omega)$

                    $l_{path} \gets l_{path}~|~hash(\lambda(\langle m, \omega, l \rangle))$
                }
            }
        }
    }

    \Return $unreachable$

    \BlankLine
    \SetKwProg{PD}{Procedure}{:}{end}
    \iffalse
    \PD{VertexReach($u$,$v$)}{
        \uIf{$\mathcal{I}_{pop}(u) < \mathcal{I}_{pop}(v)$}
        {
            \Return $false$
        }
        \ElseIf{$\mathcal{I}_{push}(u) \leq \mathcal{I}_{push}(v)$}
        {
            \Return $true$
        }

        \If{$\mathcal{N}_{out}(v) \nsubseteq \mathcal{N}_{out}(u)$ or $\mathcal{N}_{in}(u) \nsubseteq \mathcal{N}_{in}(v)$}
        {
            \Return $false$
        }

        \Return $unsure$
    }
    \BlankLine
    \PD{LabelReach($u$)}{
        \For {$i \in [1, k)$}
        {
            \If{$(\mathcal{V}^{lab}(u)[i]~|~l_{path})$ don't match $l_{ptn}$}
            {
                \Return $false$
            }
        }
    }
    \fi
\end{algorithm}

\textbf{Group pruning.} As mentioned in the previous section,  the reachable vertices of a vertex are decomposed into one or more groups. Then each group is mapped to the horizontal index and vertical index, respectively. When answering reachability queries, the algorithm will first check the multi-way index of a vertex (lines 12 to 16). If the index does not match the pattern, it means that the group does not contain the target vertex or the labels on paths in the group does not satisfy the constraints specified in the pattern. Thus, the group will be pruned out. The vertices in the group will not be checked. By this way, we avoid unnecessary checks. Moreover, if the labels that are not allowed appear in the path index of the current vertex, the groups will be discarded without exploration because they do not match the pattern.%Unlike individual vertex pruning, block pruning tends to discard more vertices that do not match the pattern. When the number of branches is large, our approach needs to explore each branch even though they do not match the goal labels. When we find the target vertex $v$ by the path, then it indicates the query $ \xymatrix @C=1.5em {u\ar@{~>}[r]^?_{\mathcal{P}} & v}$ is true (lines 7-8). 

\textbf{Skipping label check after the pattern is satisfied.} At each vertex, the algorithm  evaluates whether combinations of labels stored in the path index of a vertex match the remaining query pattern (Procedure \textit{LabelReach} of Algorithm~\ref{alg:query}).  
If the remaining pattern is satisfied, the algorithm will skip the check of label reachability, but focus on answering topological reachability. So, it avoids checking labels of the blocks and thus can reduce the search space.%Here, we use $l_{ptn}$ to store the hash values of labels in the pattern $\mathcal{P}$ (line 3) and $l_{path}$ to store those on the path (line 4).% by avoiding pruning a fraction of some branches and recurse on the remaining branches. After some iterations, the definite answer can be given. %Our forward pruning technique does not make pruning errors because the discarded branches are sure to un-match the given pattern.

\textbf{Early stopping}. 
% it is obvious clear that we needn't check the label again (lines 8 to 10). 
%By utilizing TDR index, we can terminate the search process early. Specifically, 
Since the path index can look several hops ahead, as the traversal nears the leaf vertices, if it shows that the branches do not belong to solution paths, we can immediately cease exploring that branch. Consequently, there is unnecessary to navigate each branch down to the leaf vertices. Moreover, we determine the topological reachability between vertices using $\mathcal{N}_{in}$ and the interval as detailed in the \textit{VertexReach} procedure of Algorithm~\ref{alg:query}. If the vertices are not topologically reachable, meaning no path connects them, the query will stop immediately and output "un-reachable". 
% $\mathcal{N}_{out}$ is the combination of all the groups in $\mathcal{H}^{vtx}$ for the vertex $u$, which contains information on all the successors of $u$.

\begin{myeg} 
%Table \ref{} is the TDR index for \ref{fig:example}, where the number of ways is 2 and the number of layers ($k$) is 1. The vertex indexes may be used for the following query are listed in \ref{tab:example}. 
%Here, we explain how to execute three pattern-constrained reachability queries using TDR index in \ref{tab:example}. 
Considering TDR index in Table \ref{tab:example}, when answering query $\xymatrix @C=3em {v_7\ar@{~>}[r]^?_{\mathbb{NOT}~a} & v_4}$, the first way will be discarded since $\mathcal{V}^{lab}_1(v_7)=10000$, which indicates that there is no path without label $a$. Because of $hash(v_4)=1 \not \in \mathcal{H}^{vtx}_2(v_7)$, the second way will also be discarded. So the query returns "un-reachable". When addressing query $\xymatrix @C=3em {v_0\ar@{~>}[r]^?_{b~\mathbb{AND}~e} & v_6}$, since $\mathcal{H}^{lab}_1(v_0)=11110$, which means no paths in the first way contains label $e$, traversal proceeds from the second way. Along the path $v_0 \xrightarrow{e} v_8 \xrightarrow{b} v_4$, the labels $b$ and $e$ have been matched. Therefore, it is only necessary to verify whether $v_4$ is physically reachable to $v_6$. As interval of $v_6$=$[6,7]$ is contained within the interval of $v_4$=$[5,8]$, the outcome of this query is "reachable".

%For query$ \xymatrix @C=5em {2\ar@{~>}[r]^?_{\mathbb{NOT} (\{a,c\})} & 6}$, since $\mathcal{H}^{vtx}(2)[0]$ does not contain $hash(6)$, we discard vertex $3$ and visit vertex $4$. Label $e$ on the edge from 2 to 4 match the pattern. $\mathcal{H}^{vtx}(4)[0]$ does not contain $hash(6)$ and $\mathcal{H}^{lab}(4)[1]$ does not match label in the pattern, hence vertex $4$ is discarded as well. All successors of vertex $2$ are discarded, so the query returns $unreachable$. 
\end{myeg}

\section{Experimental Evaluation}\label{sec:eval}
%In this section, we evaluate the effectiveness and efficiency of our proposed algorithm through extensive experiments. The setup of experiments is presented in Section~\ref{exp:setting} and the following sections are the experimental results. We initially test the indexing time and index space of TDR on real graphs in Section~\ref{exp:index}. Subsequently, Section~\ref{exp:query} provides the execution time of pattern-constrained reachability queries on real-world graphs to assess its practicality. Lastly, we conduct a thorough study in Section~\ref{exp:DandL} and Section~\ref{exp:structure} to explore the influence of various features, such as label set size, average degree, and the number of vertices, on the TDR index using synthetic graphs. 
% we investigate the performance of the horizontal and vertical pruning strategies through ablation analysis.

\subsection{Experimental Settings}\label{exp:setting}
% All programs, including LI\cite{LI} and P2H+\cite{P2H'2020} whose source codes are Kindly provided by the authors, are implemented in c++ and compiled by g++7.5.0. All experiments were carried out on a single thread, and we limited the time to 24 hours and the space to 128GB.

\textbf{Datasets:} 10 real datasets from SNAP \cite{SNAP} and KONECT \cite{KONECT} have $352K$ to $632M$ edges and $5 - 2774$ labels (Table \ref{tab:rg}). 
%For the diversity of the dataset, we also include some directed acyclic graphs, such as $String^*$ (i.e. \textit{StringFC} and \textit{StringHS}) and \textit{twitter}.
% Except that $String^*$ and \textit{twitter}, the other graphs are not directed acyclic graphs. 
Furthermore, we produce synthetic graphs based on the '\textbf{P}referential \textbf{A}ttachment' model (PA-dataset), known for its skewed out-degree distribution \cite{1999'PA}, and the “\textbf{E}rd$\Ddot{o}$s-\textbf{R}$\Acute{e}$nyi” model (ER-dataset) , which approaches a uniform out-degree distribution \cite{1959'ER}. %These data sets are referred to as PA-datasets and ER-datasets, respectively. 
Each synthetic graph contains roughly 200K vertices. For each unlabeled graph, we produce labels that are uniformly assigned to its edges. 

\begin{table}
\centering
\caption{\small{Statistics of real digraphs}}
\label{tab:rg}
\resizebox{0.95\linewidth}{!}{
\begin{tabular}{l|rrcc}
\hline
% \textbf{Name}&\textbf{Dataset}&\textbf{$|V|$}&\textbf{$|E|$}&\textbf{$|\zeta|$}&\textbf{Synthetic Labels} \\
\textbf{Dataset} & \textbf{$|V|$} & \textbf{$|E|$} & \textbf{$|\zeta|$} & \textbf{\thead{Synthetic\\Labels}} \\
\hline 
%thiers	&	329 	&	352,375 	&	50 	&		\\
% robots	&	1,723 	&	3,596 	&	4 	&		\\
% advogato	&	6,542 	&	47,135 	&	3 	&		\\
Youtube	&	15,089 	&	13,628,895 	&	5 	&		\\
StringFC	&	19,173 	&	6,513,176 	&	9 	&		\\
%StringHS	&	19,386 	&	5,969,249 	&	9 	&		\\
%facebook	&	45,813 	&	461,180 	&	223 	&		\\
%superuser	&	194,085 	&	857,643 	&	2,774 	&		\\
email	&	265,214 	&	418,956 	&	16 	&	$\surd$	\\
webStanford	&	281,904 	&	2,312,497 	&	32 	&	$\surd$	\\
NotreDame	&	325,729 	&	1,469,679 	&	16 	&	$\surd$	\\
citeseer	&	384,414 	&	1,744,590 	&	16 	&	$\surd$	\\
webBerkStan	&	685,231 	&	7,600,595 	&	32 	&	$\surd$	\\
%webGoogle	&	916,428 	&	5,105,039 	&	32 	&	$\surd$	\\
wikitalk	&	1,140,149 	&	4,010,611 	&	2,321 	&		\\
socPokecL	&	1,632,804 	&	30,622,564 	&	32 	&	$\surd$	\\
%dblp	&	1,824,701 	&	11,865,685 	&	77 	&		\\
% stackoverflow	&	2,601,977 	&	33,547,353 	&	2,775 	&		\\
twitter	&	41,652,231 	&	632,007,285 	&	32 	&	$\surd$	\\

\hline
\end{tabular}
}
\end{table}
\vspace{-1pt}

\textbf{Algorithm:} We are the first to investigate PCR queries on labeled graphs. For comparison, we implement a \textbf{DFS}-based approach to answer PCR queries because BFS is memory intensive. Current similar research efforts mainly focus on \textbf{L}abel-\textbf{C}onstrained \textbf{R}eachability (LCR) queries. 
PCR can express LCR queries using operators $\mathbb{NOT}$ and $\mathbb{AND}$. Here, we compare our approach with P2H+ \cite{2020'P2H,2022'P2H} and PDU (P2H+DOR+UQF) \cite{2023'PDU} on answering LCR queries.

\textbf{Query Generation:} We generate queries with two labels for the datasets \textit{Youtube} and \textit{StringFC}, while with four labels for all other datasets. For each dataset, $2k$ true-queries and $2k$ false-queries are generated based on the respective operators, and named as $\mathbb{AND}$-queries, $\mathbb{OR}$-queries, $\mathbb{NOT}$-queries and LCR-queries, respectively.
% When $|\zeta| < 16 $, the number of labels in the three query sets is sequentially determined as $\{|\zeta|/4, |\zeta|/2, |\zeta|-2\}$; otherwise, they are $\{4, 8, 12\}$, respectively. 
% Furthermore, in comparison with LCR, we generate both true and false queries (named as IN-queries) using \cite{2017'LI,2020'P2H,2022'P2H}.  

\textbf{Settings:} All experiments are run on a Linux server with 256GB of memory and a 2.0GHz Intel Xeon Gold 5117 CPU. All programs are implemented in c++ and compiled by g++ 7.5.0. %We limit the time to 72 hours and the space to 128GB.

\begin{table*}[htb]
\centering
\caption{\small{The execution time of $\mathbb{AND}$-, $\mathbb{OR}$-, and $\mathbb{NOT}$-queries on real digraphs for TDR and DFS (in second). Here, the number of labels in true-query set and false-query set is $|\zeta|/4$ or 4.}}
\label{tab:PCR}
\resizebox{0.95\linewidth}{!}{\begin{tabular}{l|rr|rr|rr|rr|rr|rr}
\hline
\multirow{3}{*}{\textbf{Dataset}} & \multicolumn{4}{c|}{\textbf{$\mathbb{AND}$}}&\multicolumn{4}{c|}{\textbf{$\mathbb{OR}$}}&\multicolumn{4}{c}{\textbf{$\mathbb{NOT}$}} \\
\cline{2-13}
~ & \multicolumn{2}{c|}{\textbf{true-query}}&\multicolumn{2}{c|}{\textbf{false-query}} & \multicolumn{2}{c|}{\textbf{true-query}}&\multicolumn{2}{c|}{\textbf{false-query}}& \multicolumn{2}{c|}{\textbf{true-query}}&\multicolumn{2}{c}{\textbf{false-query}} \\
~ & TDR & DFS & TDR & DFS & TDR & DFS & TDR & DFS & TDR & DFS & TDR & DFS \\
\hline 
%thiers	&	529.92 	&	22,511.80 	&	860.08 	&	13,714.70 	&	2.18 	&	157.62 	&	4.98 	&	12,666.10 	&	2.16 	&	11.16 	&	13.29 	&	1,225.91 	\\
Youtube	&	0.17 	&	56.15 	&	3.28ms	&	87.29 	&	0.02 	&	54.47 	&	1.91ms	&	85.78 	&	0.52 	&	46.87 	&	0.26 	&	44.97 	\\
StringFC	&	2.65 	&	4.16 	&	0.05 	&	36.72 	&	0.01 	&	3.77 	&	2.05ms	&	37.01 	&	0.50 	&	21.17 	&	0.17 	&	14.61 	\\
%StringHS	&	2.39 	&	4.46 	&	0.03 	&	40.02 	&	0.01 	&	3.78 	&	2.05ms	&	39.55 	&	0.48 	&	21.38 	&	0.16 	&	15.25 	\\
%facebook	&	4,788.53 	&	12,639.22 	&	714.12 	&	3,368.00 	&	75.15 	&	1,894.52 	&	56.27 	&	1,602.04 	&	52.15 	&	805.35 	&	16.11 	&	1,084.10 	\\
%superuser	&	27,604.42 	&	54,673.77 	&	4,280.57 	&	15,082.54 	&	488.78 	&	11,235.96 	&	941.27 	&	8,209.92 	&	117.32 	&	5,209.91 	&	5.87 	&	4,918.77 	\\
email	&	0.07 	&	67.02 	&	2.59ms	&	1.11 	&	3.32ms	&	52.00 	&	2.76ms	&	1.06 	&	0.84 	&	8.79 	&	0.01 	&	4.53 	\\
webStanford	&	0.03 	&	43.44 	&	0.06 	&	1.33 	&	3.74ms	&	42.27 	&	0.05 	&	1.36 	&	9.43 	&	19.08 	&	0.05 	&	0.26 	\\
NotreDame	&	0.02 	&	127.28 	&	0.02 	&	4.76 	&	3.32ms	&	118.56 	&	0.02 	&	4.51 	&	0.19 	&	2.05 	&	0.01 	&	0.54 	\\
citeseer	&	0.03 	&	33.59 	&	2.86ms	&	0.77 	&	0.02 	&	27.91 	&	3.22ms	&	0.65 	&	0.01 	&	1.15 	&	2.89ms	&	0.36 	\\
webBerkStan	&	0.03 	&	35.60 	&	0.08 	&	1.70 	&	0.01 	&	33.62 	&	0.07 	&	1.65 	&	3.49 	&	7.96 	&	0.05 	&	0.24 	\\
%webGoogle	&	0.03 	&	603.18 	&	0.01 	&	0.80 	&	3.29ms	&	542.28 	&	4.60ms	&	0.75 	&	95.12 	&	154.81 	&	3.62ms	&	0.20 	\\
wikitalk	&	94,027.91 	&	197,423.10 	&	12,103.19 	&	39,525.68 	&	2,091.48 	&	24,999.64 	&	4,068.26 	&	17,746.65 	&	491.91 	&	9,918.00 	&	20.66 	&	7,301.04 	\\
socPokecL	&	1.62 	&	5,085.03 	&	2.98ms	&	2.64 	&	3.93ms	&	5,399.55 	&	3.27ms	&	3.84 	&	871.39 	&	1,376.85 	&	0.93 	&	13.23 	\\
%dblp	&	41,763.17 	&	589,111.80 	&	22,661.03 	&	119,137.94 	&	2,173.15 	&	85,407.17 	&	2,728.41 	&	53,154.05 	&	2,184.42 	&	31,768.07 	&	916.75 	&	25,413.80 	\\
% stackoverflow	&	1,230,616 	&	1,174,040 	&	133,017 	&	273,112 	&	27,572 	&	368,992 	&	21,272 	&	134,082 	&	2,649 	&	148,252 	&	9 	&	83,286 	\\
twitter	&	61.54 	&	17,479.73 	&	3.03ms	&	30.07 	&	30.23 	&	20,101.39 	&	0.01 	&	14.88 	&	3,215.12 	&	6,255.80 	&	49.65 	&	325.75 	\\

\hline
\end{tabular}
}
\end{table*}

\subsection{Index}\label{exp:index}
%Many methods answering reachability queries typically rely on some indices. 
%Here, we choose P2H+ \cite{2020'P2H,2022'P2H} and PDU \cite{2023'PDU}, two state-of-the-art methods for comparison. Table \ref{tab:index} shows the time to build the index and the size of the index. %Both P2H+ and PDU fail to build their indices on large graphs because their index sizes are too large or their index times are too long. 
% , and with BFS in terms of query time. 
%The goal of this experiment is to understand the performance of the TDR index for practical pattern-constrained reachability queries. 

\textbf{Index Time:} %\ref{tab:index} presents the index construction time for all algorithms, where '-' indicates that the method was in overtime or out of memory on the graph. We highlight the time of the fastest index construction in each graph in bold, and it is clear that 
%%%%%%%%%%%%% 语法改进 %%%%%%%%%%%%%
% P2H+ and PDU time out when the size of the dataset (e.g. \textit{socPokecL}, \textit{twitter}) is large. For the datasets which three methods successfully load, our index time is smaller than P2H+ and PDU. When the data graph is larger, the time of our method is much smaller than P2H+ and PDU. %As seen in \ref{tab:real-query1}, the query time of P2H+ is in constant level, which is at the expense of index building time and space. For this reason, 
 %To avoid this problem, our algorithm strives to achieves a balance in index construction and query performance as much as possible. 
P2H+ and PDU timeout when processing large datasets, such as \textit{socPokecL} and \textit{twitter}. For datasets where three methods construct their indexes successfully, our approach is 2 to 5 orders of magnitude faster than PDU (e.g., \textit{citeseer}) and 3 to 6 orders of magnitude faster than P2H+ (e.g., \textit{StringHS}). When the graph is larger, our method requires considerably less time compared to P2H+ and PDU. The primary reason is that P2H+ requires multiple rounds of BFS to obtain the minimum label set between pairs of vertices during index construction. The construction process is very time-consuming when the graph is huge. The same applies to PDU. Instead, we construct partial indexes that require only a handful of depth-first searches to complete. Therefore, our method offers superior performance in processing large datasets. P2H+ requires more time to construct indexes for directed acyclic graphs, such as $String*$ because its pruning strategies do not work well in directed acyclic graphs. However,  our method remains unaffected by the directed acyclic nature of the graph. Consequently, the construction time of the index is relatively stable.%our method requires no additional processing, such as finding cycles and synchronizing the index information for each vertex on these cycles, so the indexing time will be greatly reduced. However, P2H+ and PDU are the opposite. The pruning rules of index building do not work because of no cycles, resulting in more rounds of BFS.When dealing with acyclic graphs such as \textit{StringFC} and \textit{StringHS}, TDR does not need to detect cycles and synchronize the reachable index for each vertex in these cycles, leading to a substantial reduction in indexing time. 
% In contrast, P2H + and PDU can not process acyclic graphs effectively due to the absence of cycles, resulting in more rounds of BFS during index building.In contrast, P2H+ and PDU formulate rules to perform pruned BFS from both directions for each vertex, but these pruning strategies are difficult to work with due to the absence of cycles.

\begin{table}
\centering
\renewcommand\arraystretch{1}%调行距    
\setlength\tabcolsep{2pt}%调列距
\caption{\small{Indexing time (IT) and indexing space (IS).  “-” indicates that the method times out or is out of memory on this dataset.}}
\label{tab:index}
\resizebox{0.95 \linewidth}{!}{
\begin{tabular}{l|rrr|rrr}
\hline
\multirow{2}{*}{\textbf{Dataset}}&\multicolumn{3}{c|}{\textbf{Indexing Time($s$)}}&\multicolumn{3}{c}{\textbf{Indexing Space($MB$)}} \\
\cline{2-7}
~ & P2H+ & PDU & TDR & P2H+ & PDU & TDR \\
\hline 
% advogato	&	0.54 	&	0.09 	&	\textbf{4.10ms} 	&	2.78 	&	\textbf{0.28} 	&	0.42 	\\
% robots	&	0.13 	&	0.02 	&	\textbf{1.42ms} 	&	0.58 	&	\textbf{0.08} 	&	0.11 	\\
% Youtube	&	6281.04 	&	757.44 	&	\textbf{0.58} 	&	348.72 	&	14.34 	&	\textbf{1.14} 	\\
% StringFC	&	24900.30 	&	2815.29 	&	\textbf{0.31} 	&	964.20 	&	152.14 	&	\textbf{1.44} 	\\
% StringHS	&	30778.02 	&	3168.24 	&	\textbf{0.29} 	&	990.21 	&	150.96 	&	\textbf{1.47} 	\\
% email	&	105.31 	&	7.01 	&	\textbf{0.09} 	&	120.41 	&	\textbf{10.54} 	&	16.05 	\\
% NotreDame	&	2609.09 	&	71.80 	&	\textbf{0.17} 	&	677.98 	&	38.07 	&	\textbf{21.43} 	\\
% citeseer	&	5094.07 	&	206.41 	&	\textbf{0.39} 	&	3327.97 	&	178.69 	&	\textbf{25.02} 	\\
% webStanford	&	-	&	93.30 	&	\textbf{0.38} 	&	-	&	74.36 	&	\textbf{21.14} 	\\
% webBerkStan	&	-	&	390.92 	&	\textbf{0.68} 	&	-	&	229.87 	&	\textbf{50.96} 	\\
% webGoogle	&	-	&	469.77 	&	\textbf{1.17} 	&	-	&	315.61 	&	\textbf{63.05} 	\\
% socPokecL	&	-	&	-	&	\textbf{6.78} 	&	-	&	-	&	\textbf{118.89} 	\\
% twitter	&	-	&	-	&	\textbf{175.19} 	&	-	&	-	&	\textbf{2804.87} 	\\
%thiers	&	-	&	16,998.00 	&	6.17 	&	-	&	1.80 	&	0.05 	\\
Youtube	&	1,489.32 	&	97.86 	&	0.34 	&	348.72 	&	14.34 	&	10.19 	\\
StringFC	&	9,826.35 	&	383.52 	&	0.18 	&	929.00 	&	152.14 	&	8.03 	\\
%StringHS	&	12,160.54 	&	427.78 	&	0.19 	&	933.12 	&	150.96 	&	8.15 	\\
%facebook	&	-	&	4,436,480 	&	25.78 	&	-	&	2,093.64 	&	1.68 	\\
%superuser	&	-	&	-	&	109.15 	&	-	&	-	&	2.83 	\\
email	&	9.89 	&	0.82 	&	0.04 	&	101.70 	&	10.68 	&	4.13 	\\
webStanford	&	-	&	16.12 	&	0.28 	&	-	&	89.70 	&	6.89 	\\
NotreDame	&	1,890.73 	&	24.25 	&	0.12 	&	1,243.64 	&	115.56 	&	14.17 	\\
citeseer	&	-	&	8,205.92 	&	0.23 	&	-	&	6,120.11 	&	22.47 	\\
webBerkStan	&	-	&	66.03 	&	0.38 	&	-	&	342.78 	&	17.12 	\\
%webGoogle	&	-	&	113.94 	&	0.97 	&	-	&	338.70 	&	20.10 	\\
wikitalk	&	-	&	-	&	620.91 	&	-	&	-	&	14.69 	\\
socPokecL	&	-	&	-	&	4.05 	&	-	&	-	&	43.10 	\\
%dblp	&	-	&	-	&	1,517.74 	&	-	&	-	&	53.56 	\\
% stackoverflow	&	-	&	-	&	3,813.05 	&	-	&	-	&	44.05 	\\
twitter	&	-	&	-	&	108.39 	&	-	&	-	&	861.68 	\\

\hline
\end{tabular}
}
\end{table}

\begin{figure*}[htbp]
    \centering
    \vspace{-3mm}
    \begin{subfigure}{0.6\linewidth}
        \centering
        \includegraphics[width=0.95\linewidth]{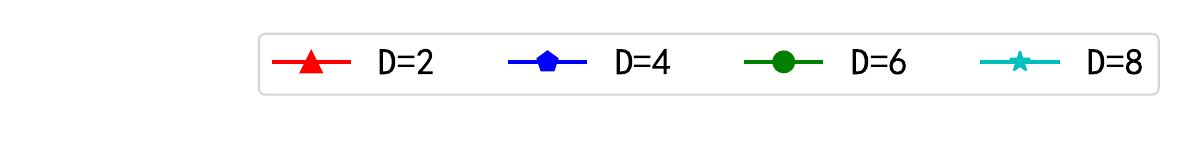}
    \end{subfigure}
    \vspace{-3mm}
        
    \begin{subfigure}{0.19\linewidth}
	\centering
    	\includegraphics[width=\linewidth]{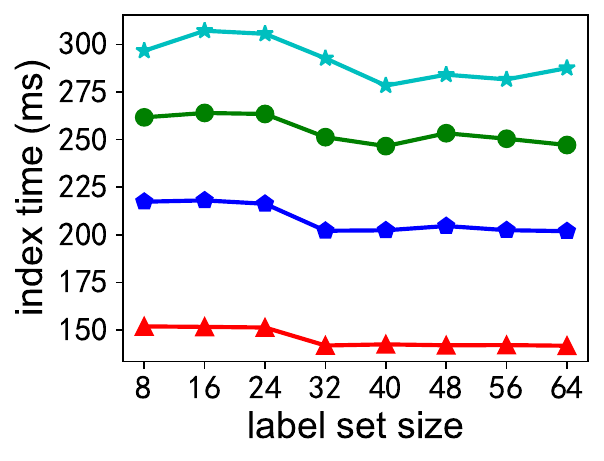}
        \caption{Index time (ms)}
        \label{exp:ER-IT}
    \end{subfigure}
    \begin{subfigure}{0.195\linewidth}
        \centering
    	\includegraphics[width=\linewidth]{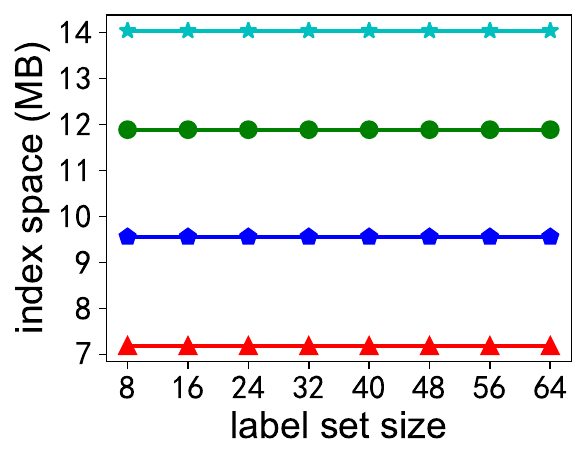}
        \caption{Index space (MB)}
        \label{exp:ER-IS}
    \end{subfigure}
    \begin{subfigure}{0.195\linewidth}
        \centering
    	\includegraphics[width=\linewidth]{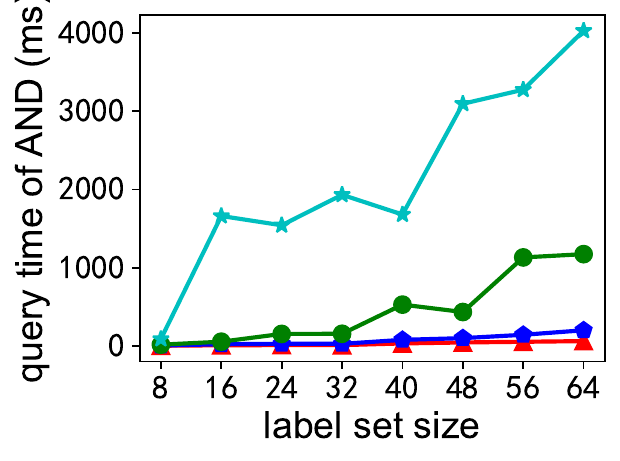}
        \caption{Time of $\mathbb{AND}$-queries (ms)}
        \label{exp:ER-AND}
    \end{subfigure}
    \begin{subfigure}{0.195\linewidth}
        \centering
    	\includegraphics[width=\linewidth]{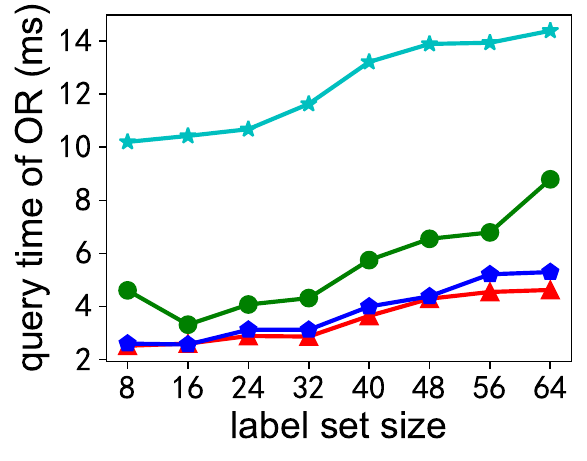}
        \caption{Time of $\mathbb{OR}$-queries (ms)}
        \label{exp:ER-OR}
    \end{subfigure}
    \begin{subfigure}{0.195\linewidth}
        \centering
    	\includegraphics[width=\linewidth]{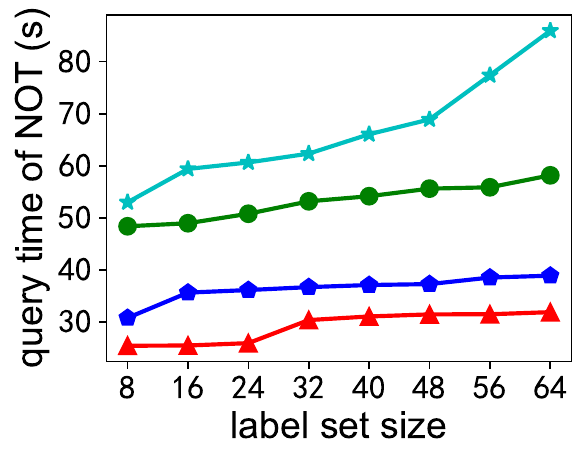}
        \caption{Time of $\mathbb{NOT}$-queries (s)}
        \label{exp:ER-NOT}
    \end{subfigure}
    \vspace{-10pt}
    \caption{\small{Indexing time, index space and execution time of $\mathbb{AND}$-, $\mathbb{OR}$- and $\mathbb{NOT}$-queries for ER-datasets with $|V|=200k$}}
    \label{exp:ER}
    \vspace{-4mm}
\end{figure*}

\begin{figure*}[htbp]
    \centering
    \begin{subfigure}{0.6\linewidth}
        \centering
        \includegraphics[width=0.95\linewidth]{images/legend.pdf}
    \end{subfigure}
    \vspace{-3mm}
        
    \begin{subfigure}{0.19\linewidth}
	\centering
    	\includegraphics[width=\linewidth]{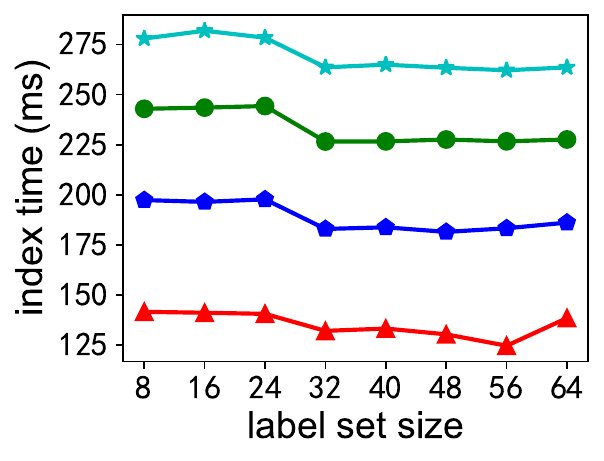}
        \vspace{-3mm}
        \caption{Index time (ms)}
        \label{exp:PA-IT}
    \end{subfigure}
    % \quad
    \begin{subfigure}{0.195\linewidth}
        \centering
    	\includegraphics[width=\linewidth]{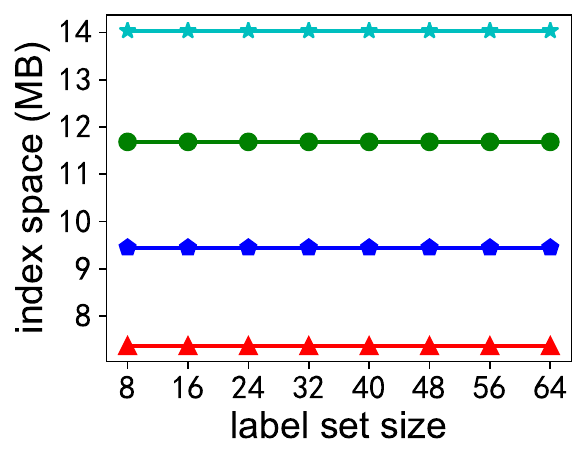}
        \vspace{-3mm}
        \caption{Index space (MB)}
        \label{exp:PA-IS}
    \end{subfigure}
    \begin{subfigure}{0.195\linewidth}
        \centering
    	\includegraphics[width=\linewidth]{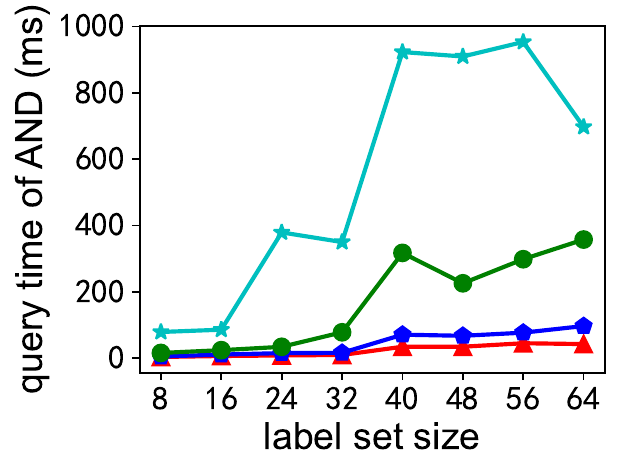}
        \vspace{-3mm}
        \caption{Time of $\mathbb{AND}$-queries (ms)}
        \label{exp:PA-AND}
    \end{subfigure}
    \begin{subfigure}{0.195\linewidth}
        \centering
    	\includegraphics[width=\linewidth]{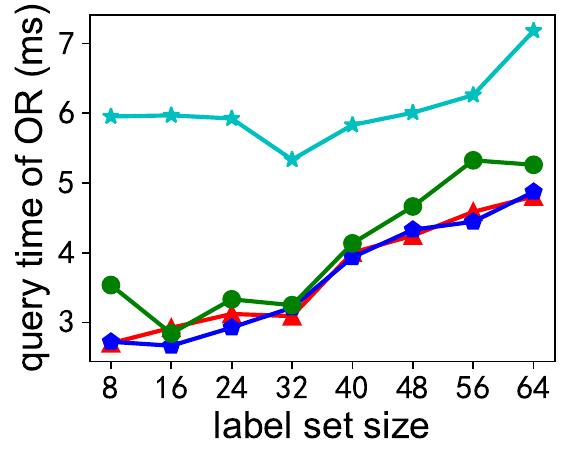}
        \vspace{-3mm}
        \caption{Time of $\mathbb{OR}$-queries (ms)}
        \label{exp:PA-OR}
    \end{subfigure}
    \begin{subfigure}{0.195\linewidth}
        \centering
    	\includegraphics[width=\linewidth]{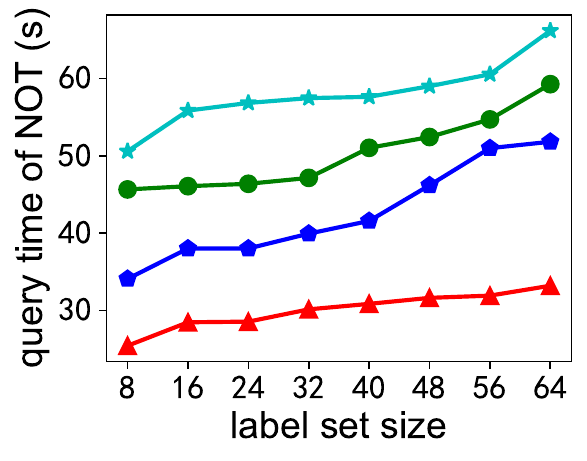}
        \vspace{-3mm}
        \caption{Time of $\mathbb{NOT}$-queries (s)}
        \label{exp:PA-NOT}
    \end{subfigure}
    \vspace{-7pt}
    \caption{\small{Indexing time, index space and execution time of $\mathbb{AND}$-, $\mathbb{OR}$- and $\mathbb{NOT}$-queries for PA-datasets with $|V|=200k$}}
    \label{exp:PA}
    \vspace{-2mm}
\end{figure*}

\textbf{Index Space:} According to the data of index space in Table \ref{tab:index}, TDR occupies one to three orders of magnitude less space compared to P2H+ and PDU for the given datasets. The reason is that TDR utilizes hash arrays storing reachable vertices and labels, while both P2H+ and PDU maintain full reachable indexes for all vertices. The full indexes of P2H+ require more space than the hash array of TDR. PDU specifically handles vertices with a degree of 1, leading to a reduced index space relative to P2H+. %It is conceived that the number of labels and the proportion of the number of vertices with degree 1 to $|V|$ are factors responsible for this result, because PDU specializes in handling these vertices to reduce the index space. Furthermore, P2H+ and PDU require more storage space for directed acyclic graphs since there are fewer index entries that can be dominated by the indexes of other vertices. That is, fewer index entries can be removed. Conversely, in acyclic graph, the index size of TDR decreases as the number of leaf vertices that do not require to build a two-dimensional index increases.

% \begin{table}
% \centering
% \caption{Indexing size (IS) on real digraphs. Here, “-” indicates that the method timed out on this dataset.}
% \label{tab:IS}
% \resizebox{\linewidth}{!}{
% \begin{tabular}{c|lll}
% \hline
% % \multirow{2}{*}{\textbf{Dataset}}&\multicolumn{3}{c}{\textbf{IS($MB$)}} \\
% % \cline{2-5}
% % ~ & P2H+ & PDU & TDR \\
% Dataset & P2H+ & PDU & TDR \\
% \hline 
% advogato	&	2.78162	&	0.281776	&	0.415	\\
% robots	&	0.577969	&	0.076458	&	0.108	\\
% Youtube	&	348.722	&	14.3362	&	1.143	\\
% StringFC	&	964.202	&	152.14	&	1.44	\\
% StringHS	&	990.209	&	150.962	&	1.467	\\
% email	&	120.41	&	10.5407	&	16.051	\\
% NotreDame	&	677.975	&	38.0703	&	21.43	\\
% citeseer	&	3327.97	&	178.691	&	25.019	\\
% webStanford	&	-	&	74.3638	&	21.136	\\
% webBerkStan	&	-	&	229.872	&	50.958	\\
% webGoogle	&	-	&	315.61	&	63.053	\\
% socPokecL	&	-	&	-	&	118.893	\\
% twitter	&	-	&	-	&	3041.696	\\
% \hline
% \end{tabular}
% }
% \end{table}
\vspace{-5pt}
\subsection{Answering PCR Queries}\label{exp:query}
%%%%%%%%%%%%% 语法改进 %%%%%%%%%%%%%
% The existing methods can only process label-constrained queries which can be described by using pattern-constrained reachability queries. Therefore, we first run pattern-constrained reachability queries.  

Here, we first evaluate the performance of TDR using three query sets in which the operators are $\mathbb{AND}$, $\mathbb{OR}$, or $\mathbb{NOT}$, respectively. P2H+ and PDU can only process label-constrained queries that can be described using pattern-constrained reachability queries (Section \ref{sec:preliminary}). Thus, we translate LCR-queries into PCR-queries and then compare our method with P2H+ and PDU.

% \ref{tab:PCR} shows the execution time of TDR to answer pattern-constrained reachability query. The query time of $\mathbb{OR}$ query is the smallest among the three queries. As the number of labels $|L|$ in the query set increases, the query time of $\mathbb{OR}$ query decreases. The reason is that it is more likely for a path to satisfy the patterns specified by $\mathbb{OR}$ query. Thus, $\mathbb{OR}$ query will approximate the unlabeled query when $|L|$ is close to $|\zeta|$. However, for $\mathbb{AND}$ and $\mathbb{NOT}$ queries, less paths will satisfy the constraints when $|L|$ is larger and thus more paths are checked until the answer can be given definitely. Hence, the query time will be longer.

\begin{table}[htb]
\centering
\caption{\small{The execution time of LCR query on real digraphs (in millisecond). Here, "-" indicates that building index fails.}}
\label{tab:LCR}
\resizebox{0.95\linewidth}{!}{
\begin{tabular}{l|rrr|rrr}
\hline
\multirow{2}{*}{\textbf{Dataset}} & \multicolumn{3}{c|}{\textbf{true-query}}&\multicolumn{3}{c}{\textbf{false-query}} \\
\cline{2-7}
~ & P2H+ & PDU & TDR & P2H+ & PDU & TDR \\
\hline 

Youtube	&	0.98	&	0.61	&	0.44s	&	1.23	&	0.94	&	0.13s	\\
StringFC	&	14.7	&	1.35	&	490	&	20.21	&	6.5	&	16.15	\\
%StringHS	&	15.68	&	2.24	&	0.26s	&	23.45	&	5.7	&	0.14s	\\
%facebook	&	-	&	14.4	&	4.57	&	-	&	18.83	&	2.44	\\
%superuser	&	-	&	-	&	8.63	&	-	&	-	&	2.6	\\
email	&	0.88	&	0.75	&	98.72	&	0.7	&	0.31	&	9.22	\\
webStanford	&	-	&	1.22	&	30.61	&	-	&	0.16	&	3.43	\\
NotreDame	&	3.35	&	13.53	&	2.85	&	3.53	&	3.66	&	1.77	\\
citeseer	&	-	&	0.21s	&	5.34	&	-	&	25.53	&	2.03	\\
webBerkStan	&	-	&	1.05	&	32.15	&	-	&	0.28	&	5.19	\\
%webGoogle	&	-	&	1.56	&	7.21	&	-	&	0.31	&	2.53	\\
wikitalk	&	-	&	-	&	16.04	&	-	&	-	&	6.43	\\
socPokecL	&	-	&	-	&	266.21s	&	-	&	-	&	51.70s	\\
%dblp	&	-	&	-	&	771.28	&	-	&	-	&	43.04	\\
% stackoverflow	&	-	&	-	&	29.91	&	-	&	-	&	1.9	\\
twitter	&	-	&	-	&	3042.96s	&	-	&	-	&	163.26s	\\

\hline
\end{tabular}
}
\end{table}

\textbf{$\mathbb{AND}$, $\mathbb{OR}$, and $\mathbb{NOT}$-queries}. The execution time of TDR and DFS is shown in Table \ref{tab:PCR}. TDR is significantly faster than DFS across all datasets for three query sets, intuitively demonstrating the efficiency of TDR indexes. 
For all datasets, $\mathbb{NOT}$ queries execute more quickly on the false-query set as compared to the true-query set. This observation, however, does not extend to $\mathbb{AND}$ and $\mathbb{OR}$ queries. The reason lies in the fact that to answer $\mathbb{AND}$ and $\mathbb{OR}$ queries, it is sufficient to verify the existence of a path satisfying the given label sets. In contrast, $\mathbb{NOT}$ queries require identifying a path lacking the given label set. Consequently, the search space for $\mathbb{NOT}$ queries is typically much larger compared to that for $\mathbb{AND}$ and $\mathbb{OR}$ queries. %Specifically, in scenarios where a vertex pair satisfies topologically reachable, $\mathbb{AND}$ and $\mathbb{OR}$ can return true when the given set of labels has been matched along the currently visited path, whereas $\mathbb{NOT}$ necessitate a comprehensive label check along the entire path. 
Similarly, the query time of $\mathbb{OR}$ queries is generally smaller compared to $\mathbb{AND}$ queries, the constraints of $\mathbb{OR}$ queries are less stringent, making them faster. It is also true when answering a false-query because our method needs to ensure that all paths fail to meet the constraint specified in the $\mathbb{AND}$ query or $\mathbb{OR}$ query.

% Label-constrained reachability query can be translated into a $\mathbb{NOT}$ query with the complement of the label set in label-constrained reachability query. \ref{tab:LCR} shows its query time with P2H+ and PDU. Since P2H+ and PDU build indices with full reachability information, for most of those graphs they successfully load, they are faster than TDR. The reason is that the complement is larger than the label set specified in LCR query. Therefore, TDR needs to check more combinations of labels in the complement. However, TDR beats PDU on large graphs with more labels (e.g. \textit{sebStanford}, \textit{webBerkStan}, and \textit{webGoogle}). P2H+ and PDU can not process large graphs, such as \textit{socPokecL} and \textit{twitter} because building their indices are very costly. %This means that TDU reduces the indexing overhead without affecting query performance too much, that is, it trades off between indexing overhead and query performance.

\textbf{LCR-queries}. %A label-constrained reachability query can be converted into the pattern-constrained reachability query with combination operators of $\mathbb{NOT}$ and $\mathbb{AND}$, whose label set is the complement of original query set. 
P2H+ can only load 4 of 10 data sets, while PDU can manage to load 7. The reason is that it is costly for P2H+ and TDR to build their indices. Therefore, we evaluate our approach in comparison with the two competitors on the datasets they successfully load. The execution time (Table \ref{tab:LCR}) shows that TDR only outperforms P2H+ and PDU on \textit{citeseer} and \textit{NotreDame}. One reason is that P2H+ and PDU construct comprehensive indices with detailed reachability information. Another significant reason is that LCR queries specify label sets inclusive of all labels present along solution paths, thereby substantially reducing the search space. Nevertheless, any LCR query is translated into a PCR query, which is a combination of $\mathbb{AND}$ sub-queries and $\mathbb{NOT}$ sub-queries, each specifying a subset of labels on solution paths. Consequently, the search space of the PCR query is notably larger  than that of the initial LCR query. % Since PCR is an NP-hard problem, it is not practical or cost-effective to construct a full index for it. We make a trade-off between indexing overhead and query performance. On small graphs our query time is a bit slower than them, however, 
However, TDR still outperforms P2H+ and PDU on \textit{citeseer} and \textit{NotreDame}. For each dataset, our method answers false-queries faster than true-queries because TDR index is designed for answering false queries. Thus, TDR can efficiently answer real reachability queries on large graphs (e.g., \textit{socPokecL}, \textit{twitter}) because they are sparse and most of their vertex pairs tend to be unreachable.  

\subsection{Impact of Graph Characteristics}\label{exp:DandL}
To show the impact of graph characteristics, we run experiments on ER-datasets and PA-datasets. We vary the average degree $D$ $(2-8)$, and the size of the label set $|\zeta|$ $(8-64)$. The query time is the average of true-queries and false-queries.%\ref{exp:ER} and \ref{exp:PA} illustrate the experimental results of indexing time, the index space, and the query time as $D$ and $|\zeta|$ increase for the PA-datasets and the ER-datasets, respectively. We discuss the results below. 

\textbf{Indexing Time:} %When constructing reachablility indices, each label on an edge is hashed into a corresponding bit mask, making the indexing time dependent on the number of edges rather than the number of labels. Consequently, 
As illustrated in \figurename~\ref{exp:ER-IT} and \ref{exp:PA-IT}, the number of labels $|\zeta|$ has minimal impact on our indexing time for a given $D$ of two data sets because each label is independently hashed into a bit mask. Moreover, with a fixed vertex number, an increase in $D$ will result in more edges, thereby producing additional groups in the index and leading to larger indexing time. Still, for both ER-graphs and PA-graphs, \figurename~\ref{exp:ER-IT} and \ref{exp:PA-IT} demonstrate a linear increase in indexing time (indicated by the evenly spaced polylines) as $D$ grows. %, will affect the indexing time, but the size of the label set does not. As expected, the index time shows a linear increase (the polylines are evenly distributed) as $D$ increases for both ER-graphs and PA-graphs. 

\textbf{Index Space:} %Figures~\ref{exp:ER-IS} and \ref{exp:PA-IS} illustrate that an increase in the average degree $D$ results in a larger index space for both ER-graphs and PA-graphs. It is logical that the number of groups increases with $D$ during multi-way hashing as analyzed earlier, and each group needs space to store the index. Furthermore, due to the multi-way hashing function, the corresponding impact becomes smaller when $D$ is larger, which is to avoid the large overhead of the index space when the degree increases. Since the number of label sets does not exceed the bit masks, $|\zeta|$ does not affect the index space.
The number of hashing groups for the horizontal dimension rises with the vertex's out-degree, which is intimately connected to the average degree $D$. Consequently, as depicted in \figurename~\ref{exp:ER-IS} and \ref{exp:PA-IS}, a higher average degree $D$ leads to a larger index space for both ER-graphs and PA-graphs. However, since label sets in each way are mapped into fixed-size bit masks, the size of the label set $|\zeta|$ will not greatly affect the index space. %Moreover, the multi-way hashing function may incur more hash collisions as $D$ becomes larger. This design choice helps to avoid excessively large index space overhead when the degree increases. Additionally, However, since label sets in each way are mapped into fixed-size bit masks, the size of the label set $|\zeta|$ will not significantly impact the index space.

\textbf{Query Time:} For both ER and PA graphs, the query time of $\mathbb{AND}$ queries and $\mathbb{OR}$ queries grows when a specific number of labels reaches (\figurename~\ref{exp:ER-AND} - \ref{exp:ER-NOT} and \figurename~\ref{exp:PA-AND} - \ref{exp:PA-NOT}). Due to the edge labels being hashed into bit masks of fixed size, hash collisions become unavoidable as the label number grows, influencing query performance. For true-queries, answering $\mathbb{AND}$ and $\mathbb{OR}$ queries can rapidly yield a true result if the label set are matched. Conversely, for false-queries, every potential path must be examined. Thus, the number of potential paths also grows as $D$ increases. As a result, query times for $\mathbb{AND}$ and $\mathbb{OR}$ queries increase concurrently with $D$. For $\mathbb{NOT}$ query where none of the labels in the query is allowed to be present in any solution path, as the value of $D$ increases, it needs to traverse an increased number of paths to answer $\mathbb{NOT}$ query. Similarly, with the growth of $|\zeta|$, more labels can appear in the paths, leading to the examination of more candidate paths. Consequently, an increase in either $D$ or $|\zeta|$ generally results in longer query times, as demonstrated in \figurename~\ref{exp:ER-NOT} and \ref{exp:PA-NOT}. %But the pattern $\mathbb{NOT}$ is different. Since it requires that all labels on the path are not in a given query set, a specific path may have to be found to return results in the worst case. As $D$ increases, more paths need to be accessed to answer the $\mathbb{NOT}$ query. Certainly, the query time still tends to increase even with the possibility of pruning some paths via the index. In the case of a \textit{\mathbb{NOT}} query, pruning occurs if any label from the query set is found on a path. Yet, as $|\zeta|$ grows, the probability of pruning through labels decrease, resulting in longer query times. The lines in \ref{exp:ER-NOT} and \ref{exp:PA-NOT} also reflect this probability.
%$\mathbb{NOT}$ query requires that all labels in the query do not appear in a reachability path. As $D$ increases, more paths need to be traversed when answering the $\mathbb{NOT}$ query. Similarly, as $|\zeta|$ grows, more labels are allowed to appear in the paths and more candidate paths need to be checked. Thus, as $D$ or $|\zeta|$ grows, the query time tends to increase as \ref{exp:ER-NOT} and \ref{exp:PA-NOT} show.

% \subsection{Effects of Path Hashing}
% We also investigate the effect of path hashing (PH) for vertical dimension.
\section{Conclusions}\label{sec:conclusion}
%This paper adds three additional patterns of label constraints to the existing label constraints, and we call the query under this constraint a pattern matching label constrained query (PLCR). If ${u \to v}$, then u can reach all the vertices that v can reach and all the vertices that can reach u can also reach v. Using this principle, this paper designs two bloom filters for each vertex v in the labeled graph, one stores all the vertices that v can reach and the other stores all the vertices that can reach v. Then during the query process, we can prune based on set inclusion relationships to speed up the query. Experiments show that our index is much smaller and the query evaluation time is several orders of magnitude faster than the state-of-the-art LCR indexing technique for 20 datasets.

In the paper, we initially define pattern-constrained reachability queries, which allow complex patterns in reachability queries on multi-label graphs. To efficiently address these queries, we divide the reachable vertices of a given vertex into distinct groups where the paths sharing common vertices are placed together. Subsequently, we construct a two-dimensional reachability index for each vertex by indexing the horizontal and vertical projections of each group, respectively. The multi-way index (horizontal dimension) and path index (vertical dimension) enable both broad-range and close-range pruning. This two-dimensional reachability index allows us to avoid unnecessary searches when answering PCR queries. Our experimental results on 10 real graphs demonstrate that the index is significantly smaller than the state-of-the-art LCR indexing methods, providing an efficient solution for answering pattern-constrained reachability queries. %Finally, we have focused on pruning the unreachable vertex to improve efficiency, and our future work will explore how to quickly answer the reachable by appropriate indexes.
% In the future, we will explore how to quickly answer the reachable by appropriate indexes.

% \section*{Acknowledgment}
% The authors are grateful to the anonymous referees for their valuable comments.

\bibliographystyle{IEEEtran}
\bibliography{reference}

\appendix
\newpage

\section{Problem Complexity} \label{appendix:problem}
Answering an LCR query can be performed in polynomial time \cite{2019'ARRIVAL}, while addressing PCR queries is NP-hard. 
\begin{mythm}
\label{thr:complexity}
    PCR is an NP-hard problem.
\end{mythm}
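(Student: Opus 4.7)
The plan is to establish NP-hardness by a polynomial-time reduction from $3$-SAT, exploiting the fact that a pattern is essentially a propositional formula over labels. The overall strategy is to let the graph enumerate truth assignments while the pattern mirrors the CNF formula directly.

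Concretely, given a $3$-CNF formula $\phi = C_1 \wedge \dots \wedge C_m$ over variables $x_1, \dots, x_n$, I construct the edge-labeled digraph $G_\phi$ on vertices $v_0, v_1, \dots, v_n$: for each $i \in \{1,\dots,n\}$, add two parallel edges from $v_{i-1}$ to $v_i$ labeled by distinct fresh labels $t_i$ and $f_i$. Any path $p$ from $v_0$ to $v_n$ necessarily has exactly $n$ edges and therefore picks, for each $i$, exactly one of $\{t_i, f_i\}$, which corresponds bijectively to a truth assignment $\alpha_p$ with $\alpha_p(x_i) = \text{true}$ iff $t_i \in S(L(p))$. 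Setting $\mathcal{P}_\phi = \bigwedge_{j=1}^{m} \mathcal{P}_{C_j}$, where $\mathcal{P}_{C_j}$ is obtained from $C_j$ by replacing every positive literal $x_i$ with the atomic pattern $t_i$ and every negative literal $\neg x_i$ with the atomic pattern $f_i$, the reduction outputs the PCR query from $v_0$ to $v_n$ under $\mathcal{P}_\phi$.

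For correctness, the paper's semantics makes the atomic pattern $t_i$ satisfied by $p$ exactly when $t_i \in S(L(p))$, which holds iff $\alpha_p(x_i)$ is true, and symmetrically $f_i$ is satisfied iff $\alpha_p(x_i)$ is false. A short induction on the structure of $\mathcal{P}_\phi$ then shows that $S(L(p))$ satisfies $\mathcal{P}_\phi$ iff $\alpha_p$ satisfies $\phi$. Conversely, every assignment $\alpha$ is realized by a canonical path $p_\alpha$ with $\alpha_{p_\alpha} = \alpha$. Hence there is a $v_0 \leadsto v_n$ path satisfying $\mathcal{P}_\phi$ iff $\phi$ is satisfiable, and since $G_\phi$ and $\mathcal{P}_\phi$ both have size $O(n+m)$, the reduction is polynomial, giving NP-hardness of PCR.

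The main subtlety I expect to have to verify carefully is the semantic alignment: the paper evaluates a pattern against the unordered label set $S(L(p))$ rather than the ordered sequence $L(p)$, so the "assignment induced by a path" interpretation is sound only because the $2n$ labels $\{t_i, f_i\}$ are pairwise distinct and each index is traversed exactly once on any $v_0 \leadsto v_n$ path. A mild bonus of the construction is that it uses only $\wedge$ and $\vee$ in the pattern, so the same argument already establishes NP-hardness for the negation-free fragment of PCR, with literal negation in $\phi$ being absorbed at the graph level via the complementary label.
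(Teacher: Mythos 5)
Your reduction is correct, and it shares the paper's basic idea --- reduce from SAT by encoding each variable's truth value as the presence or absence of a label along a source-to-target path --- but you execute it far more concretely than the paper does. The paper's own argument never specifies the graph on which the PCR query is posed: it starts from a given pattern, "expands" it with extra labels, and asserts that clause literals correspond to edges, which leaves the actual polynomial-time map from a SAT instance to a (graph, pattern, vertex-pair) triple implicit and the direction of the reduction somewhat muddled. Your layered gadget $v_0,\dots,v_n$ with two parallel edges $t_i,f_i$ per variable supplies exactly that missing piece: it makes the path/assignment bijection explicit, justifies why evaluating the pattern on the unordered set $S(L(p))$ is sound (each level contributes exactly one of two fresh labels), and yields a clean structural induction for correctness. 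What your version buys beyond the paper's sketch is (i) a fully checkable reduction, and (ii) the observation that negation-free patterns (only $\wedge$ and $\vee$) already suffice for NP-hardness, since literal negation is absorbed into the choice of the complementary label at the graph level --- a strengthening the paper's argument does not give. The one point worth stating explicitly in a final write-up is that the construction relies on the multigraph convention (parallel edges between the same pair of vertices carrying distinct labels), which the paper's preliminaries do permit.
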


\begin{proof}
Since the SAT problem is an NP-complete problem \cite{SAT'Cook71}, we can reduce SAT to PCR as follows: without loss of generality, assume a PCR query $ \xymatrix @C=1.5em {u\ar@{~>}[r]^?_{\mathcal{P}} & v}$ where $\mathcal{P}$ is a conjunction of $m$ labels ($m<\|\zeta\|$), that is, $\mathcal{P}$=$\mathbb{AND}\{l_i\}_{i<m}$. We expand $\mathcal{P}$ with new variables, each of which denotes a label $l_j$ ($m\leq j<\|\zeta\|$), i.e., $\mathcal{P}=$$\mathbb{AND}\{l_i\}_{i<m}$ $\mathbb{OR}\{l_i\}_{m\leq j<\|\zeta\|}$. Given any clause $(x_1 \land x_2 \land \cdots \land x_m)$ in SAT, we consider that clause $x_i$ is mapped to $i$-th edge of paths from $u$ to $v$. Thus, we can describe the PCR query by rewriting clause $x_i$ as $\mathcal{P}$. Each label of $\mathcal{P}$ corresponds to a variable, which has two states: presence or absence of the label on each edge of paths from source vertex $u$ to the target vertex $v$ while each variable in SAT can only take two values of 0 and 1. Therefore, PCR is NP-hard. %$(\lambda(e))$ where $\mathcal{P}(\lambda(e))$ is the instance of $\mathcal{P}$ on edge $e$ of paths to $v$. % satisfies Boolean formula $\mathcal{P}$. %That is, it is a PCR query. %The clauses in SAT are joint to each other by $\land$, which corresponds to our logical operator $\mathbb{AND}$.
  
% We transform $n+1$ clauses in SAT into $n$ intermediate vertices between source vertex $u$ and target vertex $v$. %If a clause has $m$ elements, there are $m$ edges between two adjacent vertices, labeled by the elements in the clause. Then the path where the PCR query is $true$ is also the solution of the SAT. It is clear that given a path, $ \xymatrix @C=1.5em {u\ar@{~>}[r]^?_{\mathcal{P}} & v}$ can be determined in polynomial time. 
 
 \end{proof}

\section{Complexity Analysis}
Initially, we analyze the complexity of building the index, followed by a complexity analysis of answering PCR queries using the index.
\subsection{Building the index}\label{CC}
%Overall, the memory cost of multi-way hashing and path hashing is trivially computed, and the execution cost is simply the expected cost of executing the hash function. We specifically analyze the space and time complexity as follows.
% Let the reachability query to query whether there exists a path with some labels of $\{l_i ~|~0 \leq i \leq |\zeta|\}$. The probability to answer yes is 1-$\prod_{0\leq i<m}\bar{\rho}(l_i)$.

We first analyze the complexity of the index space. For each vertex $u$, the index consists of the horizontal dimension and vertical dimension index. $u$'s successors are divided into $\frac{|Suc(u)|}{m}$ blocks according to its out-degree where $m$ is the number of neighboring vertices in each block. For each block, in horizontal dimension, the reachable vertex index and label index are stored in two bit arrays, respectively; in vertical dimension, two bit arrays of length $k$ are allocated for path index ($\mathcal{V}_{lab}$). Hence, the storage space of TDR is $|V|(k+\sum_{i=1}^{|V|}\frac{|Suc(u_i)|}{m})$. Assume that the average degree is $\bar{d}$, then the storage complexity of the index is $O((\frac{\bar{d}}{m}+k)\cdot |V|)$.

We take a bottom-up approach to process each vertex in turn, and each vertex is pushed into the stack only once during index building. When vertex $u$ is visited, we index $u$ based on its successors. Multi-way hashing assigns these successors to different groups and merges their index into $u$. Each successor is processed in O(1) time. Path hashing requires merging the last $k-1$ bits of the successors' $\mathcal{V}_{lab}$ into $u$, so each successor is visited $k-1$ times. Thus, the total time is $|V|+\sum_{i=1}^{|V|}(1+k-1)\cdot|Suc(u_i)|$, that is, the time complexity is $O(|V|+k\cdot|E|)$.
\begin{figure*}[htp]
	\centering
	
	\begin{subfigure}{0.195\linewidth}
	    \centering
    	\includegraphics[width=\linewidth]{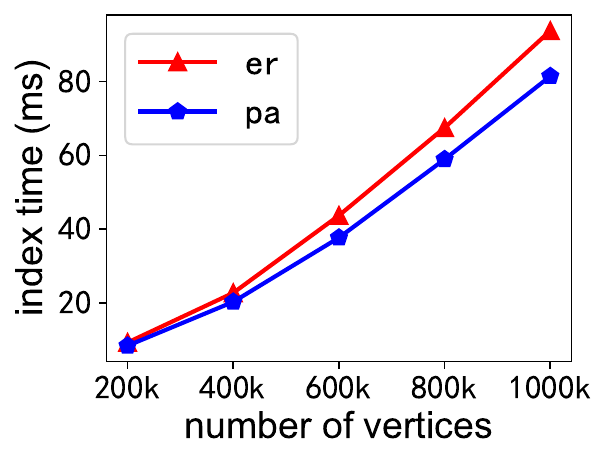}
        \caption{Index time (ms)}
        \label{exp:number-IT}
    \end{subfigure}
    % \quad
    \begin{subfigure}{0.2\linewidth}
        \centering
    	\includegraphics[width=\linewidth]{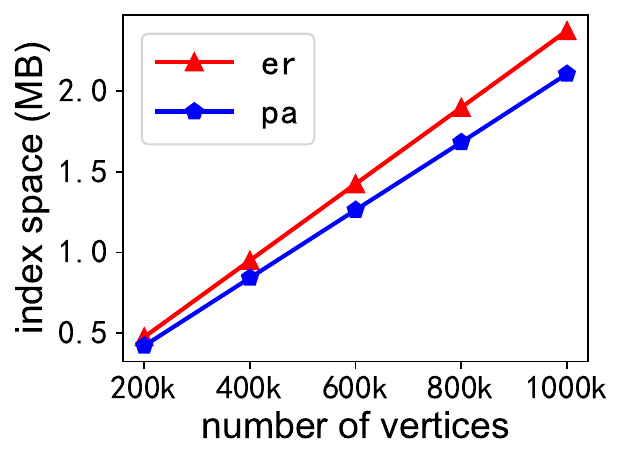}
        \caption{Index space (MB)}
        \label{exp:number-IS}
    \end{subfigure}
    \begin{subfigure}{0.19\linewidth}
        \centering
    	\includegraphics[width=\linewidth]{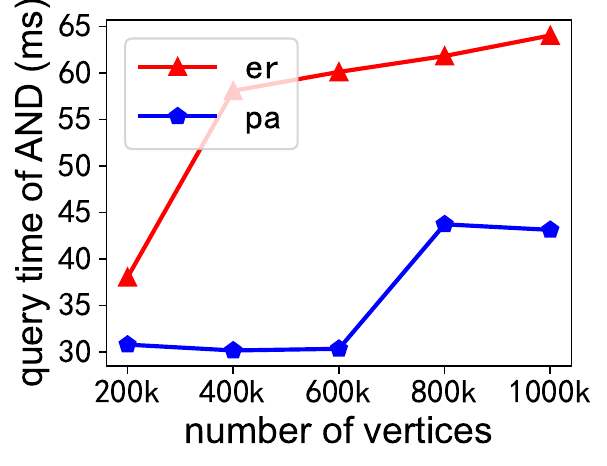}
        \caption{Time of $\mathbb{AND}$-queries (ms)}
        \label{exp:number-AND}
    \end{subfigure}
    \begin{subfigure}{0.19\linewidth}
        \centering
    	\includegraphics[width=\linewidth]{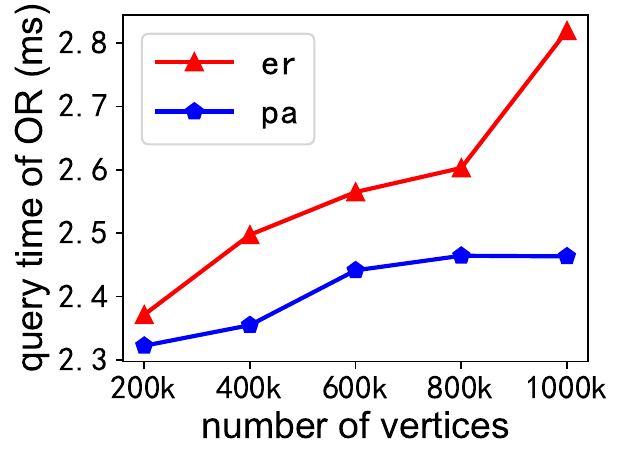}
        \caption{Time of $\mathbb{OR}$-queries (ms)}
        \label{exp:number-OR}
    \end{subfigure}
    \begin{subfigure}{0.2\linewidth}
        \centering
    	\includegraphics[width=\linewidth]{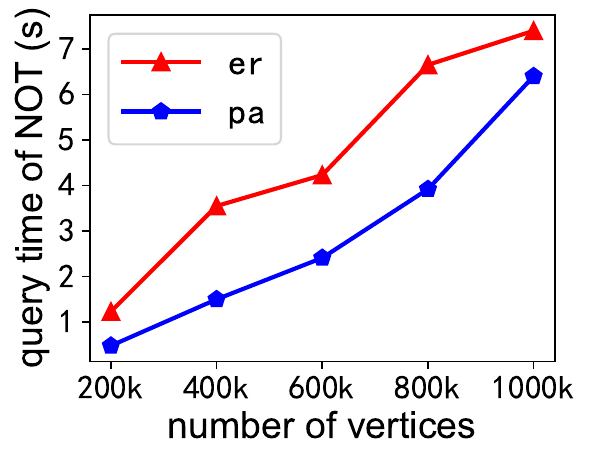}
        \caption{Time of $\mathbb{NOT}$-queries (s)}
        \label{exp:number-NOT}
    \end{subfigure}
    \caption{\small{Indexing time, index space and execution time of $\mathbb{AND}$-, $\mathbb{OR}$- and $\mathbb{NOT}$-queries for graphs with $D=2$ and $|L|=32$}}
	\label{exp:number}
\end{figure*}

\subsection{Answering reachability query}\label{sec:CA}
Given a reachability query $ \xymatrix @C=1.5em {u\ar@{~>}[r]^?_{\mathcal{P}} & v}$, %we briefly analyze the probability that the query returns $true$. For a label $l_i \in \zeta$, assume that the probability is $\rho(l_i)$, so the probability of not present of $l_i$ is ${\bar{\rho}(l_i)=1-\rho(l_i)}$. Let the reachability query query whether there exists a path with $m$ labels, where ${0 \leq m \leq |\zeta|}$. The probability of true answering is ${1-\prod_{0 \leq i < m}\bar{\rho}(l_i)}$ if $\mathcal{P}$ has only the operator $\mathbb{OR}$, and ${\prod_{0 \leq i < m}\rho(l_i)}$ if $\mathcal{P}$ has only the operator $\mathbb{AND}$. In pattern $\mathcal{P}$, if operator $\mathbb{NOT}$ is followed by a subpattern consisting only of $\mathbb{AND}$, then the probability is ${\prod_{0 \leq i < m}\bar{\rho}(l_i)}$, and if it is followed by a subpattern consisting only of $\mathbb{OR}$, then the probability is $1-{\prod_{0 \leq i < m}\rho(l_i)}$.
we analyze the time complexity to answer the query. Let $N(u, v)$ denote the number of vertices visited when answering whether $u$ can reach $v$. 
% On each vertex, assume the probability of answering unreachable directly is $(1-\rho)$. So, the (1-$\rho$) branches can be removed when processing each vertex. If $u$’s hash index can not answer the query, we need to traverse the graph and check whether the index of $u$’s neighbors can answer the query. $u$'s neighbors are hashed into $m$ groups. Assume that the probability that the target vertex belongs to the group $i$ is $\rho_{w_i}$. Therefore, the number of vertices to traverse is $N(u,v)$ = 1+$\rho$$\rho_{w_i}$ $\sum_{c_i \in Suc(u)}N(c_i,v)$ ($PT(w,endv)\geq k$, where $k$ is the length of each path index and $endv$ is the vertex without out-edges). $ \xymatrix @C=3.5em {w\ar@{~>}[r]^{len(p) \geq k}_{} & endv}$,
When answering a reachability query, we need to examine the TDR index associated with each vertex. If $u$’s index (such as $\mathcal{I}$, $\mathcal{N}$, and $\mathcal{V}_{lab}$) shows that the query is unreachable or false, the search process will terminate and return the answer immediately. For the case, $N(u,v)=1$. Additionally, $u$’s index can return  \textit{reachability} or \textit{true} with probability $\rho$. Consequently, our approach must verify each of $u$'s neighbors one by one. Given that these neighbors are distributed into $m$ groups, suppose the probability of the target vertex being in group $i$ is $\rho_{i}$. This implies that with probability $\rho_{i}$, the successors of $u$ in group $i$ will be examined. Hence, the number of vertices to be explored is $N(u,v)$ = 1+$\rho$$\sum_{0 \leq i < m}(\rho_{i}\sum_{w \in \mathcal{H}(u)_i}N(w,v))$. For ease of discussion, we assume every vertex $w$ has a probability $\rho$ of returning \textit{reachability} or \textit{true}. Otherwise, the branches originating from $w$ can be disregarded with probability (1-$\rho$).
% So, the number of vertices to be traversed is $T(u, v)$=1+(1-$\rho_{uv}$)$\sum_{c_i\in N^+(u)}T(c_i,v)$ where $N^+(u)$ is $u$'s out-neighbors. 

Let $t(u,v)$ denote the time to answer reachability query $(u,v)$. Assume $\bar{d}$ to be the average degree and $\bar{l}$ is the average length of the paths.
The above process can be written as $t(u,v) = \rho\sum_{w\in Suc(u)}(t(w,v)+ O(1))$, then we have

$~~~t(u,v) =\rho\sum_{w\in Suc(u)}(t(w,v)+ O(1))$

$ =\rho \sum_{w \in Suc(u)} (\rho \sum_{w^{'} \in Suc(w)} ( t(w^{'},v)+ O(1)))$

    $=(\rho\bar{d})^{\bar{l}}+O(1)\sum_{i=0}^{\bar{l}}(\rho\bar{d})^i$ 

$=(\rho\bar{d})^{\bar{l}}+\frac{(\rho\bar{d})^{\bar{l}+1}-1}{\rho\bar{d}-1} O(1)$

It shows that search complexity may grow exponentially with increasing search depth. If forward pruning can remove large fractions of branches without further check, then the complexity is of the same order as the complexity of the process in each vertex.

\section{Further Evaluation}
Since our approach maps the reachable vertices of a vertex into bit sets with fixed size, our approach can work on large graphs. Here, we evaluate the scalability of TDR using synthetic graphs.

\subsection{Scalability}\label{exp:structure}
% To investigate the scalability of our method, we set $D=2$, $|\zeta|=32$ and vary the number of vertices from 2 million to 10 million on the ER-datasets and PA-datasets. The results of indexing time, index space, and query time for both ER-graphs and PA-graphs are reported in \ref{exp:number}.

To investigate the scalability of our method, we set $D=6$ and $|\zeta|=32$, and then vary the number of vertices from $200K$ to $1000K$ in both the ER-datasets and PA-datasets. The indexing time, index space, and query time of three subpatterns for both the ER-graphs and PA-graphs are presented in Fig. \ref{exp:number}, respectively.

\textbf{Index:} Since TDR stores the reachability information of each vertex, the indexing time and space required for our indices scale linearly with the number of vertices (Figs. \ref{exp:number-IT} and \ref{exp:number-IS}). Additionally, the uniform degree distribution in ER graphs results in fewer vertices with a degree of 0 compared to PA graphs. As a result, the indexing process in ER graphs involves more vertices than in PA graphs, leading to a bit more indexing time and a larger index space for ER graphs. % the indexing time and index space on the ER-graphs and PA-graphs are about the same. Although The ER-graphs have a uniform degree distribution and PA-graphs are in a skew distribution, the results show that the graph structure has little effect on our index construction. The ER-graphs have uniform degree distribution, so the grouping cost of each vertex will be relatively similar during multi-way hashing. However, the skew distribution of PA-graphs results in high index costs for some vertices and low indexing costs for others. The total index cost is the sum of the index costs on all vertices, and intuitively should vary little.

\textbf{Query Time:} The execution times of $\mathbb{AND}$, $\mathbb{OR}$, and $\mathbb{NOT}$ queries grow proportionally to the number of vertices in both ER and PA graphs (Figs. \ref{exp:number-AND}-\ref{exp:number-NOT}). This increase is caused by a larger search space and resulting hash collisions. Moreover, queries in ER graphs take more time than in PA graphs. This difference arises because ER graphs have a consistent degree distribution, whereas PA graphs have a degree distribution that is uneven. ER graphs with a uniform distribution of vertex out-degrees possess more paths between vertex pairs than PA graphs with a skewed out-degree distribution. This surplus of paths leads to additional checks when answering PCR queries, consequently prolonging query times. %Therefore, index values are distributed more uniformly in ER-graphs than in PA-graphs. Our method can prune more search space on PA-graphs because the indices are more discriminative than the indices on ER-graphs. For $\mathbb{NOT}$-queries, it is the other way around. The execute time of $\mathbb{NOT}$-queries on ER-graphs is shorter than theirs on PA-graphs. The reason is that $\mathbb{NOT}$ queries can be considered as the complement query of corresponding $\mathbb{AND}$-queries. The query time for the three patterns relates to the product of the index information on the visited vertices of the path. Mathematically, it's easy to demonstrate that $a*b \leq (\frac{a+b}{2})^2$ implying more index information on ER-graphs than on PA-graphs for paths of the same length. When $D$ is small, the conflicts in the index are also relatively small. In this way, the more index information is included, the higher the probability of pruning for $\mathbb{NOT}$. However, for $\mathbb{AND}$ and $\mathbb{OR}$, since it is greatly affected by the labels on the visited paths, more index information is not necessarily beneficial for pruning, and may even reduce the pruning efficiency.

\section{Related Work}\label{appendix:relatedwork}
There has been a lot of research efforts on reachability queries \cite{2023'overview}. Early research work focuses mainly on unlabeled graph. Since many real-world graphs are labeled on edges, recent efforts try to answer reachability queries with label constraints. % In this section, we provide a brief summary of existing algorithms and discuss the state-of-the-art relevant to our work.

\subsection{Unlabeled Graph}
Answering a reachability query on unlabeled directed graphs is to find a path from the source vertex to the target vertex. If there exists a path, then the answer is reachable (true) and otherwise unreachable (false). % does not consider the influence of edges in the graph, that is, by default, all edges are related to reachability.
Many approaches \cite{2006'Dual, 2009'3-hop, 2011'Path-tree, chen'2008'chain, 2005'HLSS, 2005'HOPI, 2010'GRAIL, 2012'GRAIL, 2007'GRIPP, 2013'ferrari, 2014'IP, 2018'IP, 2013'TFLabel, 1989'TCC, 2017'BFL} have been proposed to answer reachability queries on unlabeled graphs. 
H. Wei et al. \cite{2018'IP} divided the methods into Label-Only and Label+G where label on each vertex indicates full or partial reachability information among vertices. If labels only have partial reachability information, Label+G algorithms have to traverse the graph in order to answer the queries, such as \textit{GRAIL} \cite{2010'GRAIL,2012'GRAIL} and \textit{Ferrari} \cite{2013'ferrari}, while the Label-Only algorithms like \textit{Dual-Label} \cite{2006'Dual}, \textit{3-Hop} \cite{2009'3-hop} and \textit{TF-Label} \cite{2013'TFLabel} are able to answer directly through the labels. 
Yuan et al. \cite{2018'MGTag,2022'MGTag} classify the approaches into dimension labeling and set labeling. Labels assigned by dimension labeling methods \cite{chen'2008'chain,2006'Dual,2008'pathtree,2011'Path-tree,2011'PWAH,2010'GRAIL,2012'GRAIL,2014'feline,2013'ferrari} can show relative topological relationships among vertices in different dimensions. For example, Y. Chen's algorithm \cite{chen'2008'chain} and \textit{Path-tree} \cite{2011'Path-tree,2008'pathtree} are from the dimension of chain decomposition, and MGTag \cite{2018'MGTag,2022'MGTag} is according to subgraphs and layers. So, we can answer some queries by checking the topological relationships indicated in the dimension labels of two vertices. 
%The relationship between the two is that the Label+G method aims to reduce the off-line index construction overhead of the Label-Only method by sacrificing some query time. The Label-Only method compresses the transitive closure (TC) in order to answer reachability by index alone. The Label+G method is designed to answer as many reachability queries as possible using the constructed index, and to perform Depth Frst Search (DFS) or Breadth First Search (BFS) on G when the index cannot be used alone to answer the question.
The set labeling methods \cite{2005'HOPI,2006'SCI,2005'HLSS,2014'IP,2018'IP,2009'3-hop,2010'Path-hop,2013'TFLabel,2013'HL,2017'BFL} are based on 2-hop labels. For example, BFL \cite{2017'BFL} maps all the vertices that are reachable from vertex $u$ into a bit set (Out($u$)), and maps all the vertices that can reach $u$ into another bit set (In($u$)). When answering queries, BFL checks whether target $v$ is not in Out($u$) of $u$ or $u$ is not in In ($v$). If the answer is yes, then $u$ and $v$ are unreachable. Otherwise, BFL will traverse the graph and repeat the above check for each vertex until the answer can be given. Since a large graph is sparse, the algorithm can quickly prune all unreachable branches. %, so as to reduce the query time. This algorithm is the basis of our algorithm.

\subsection{Labeled Graph}
A reachable query on a labeled graph requires not only the existence of a path between two given vertices, but also the sequence/set of labels on the path to match given constraints, which are commonly specified by regular expressions. Regular expression patterns can be classified into three types \cite{2019'ARRIVAL}: label-set restricted paths, repeated label-sequence paths, and concatenated label-chains. 
Label-set restricted paths are typically known as Label-Constrained Reachability (LCR) queries.
% Reachability queries on a labeled graph aim to find a path between two vertices whose label sequence/set is a match of a given constraint. One common way to specify constraints is through regular expressions.   % or label sets. 
\subsubsection{Regular Path Query}
A Regular Path Query (RPQ) specifies that the labels of any solution paths must satisfy a given regular expression and the problem is known to be NP-Hard \cite{1995'RPQNPHard}. There are several algorithms about RPQ. 
A. Koschmieder et al. devise an algorithm to answer RPQs by decomposing an RPQ into several smaller RPQs using infrequent labels \cite{2012'RPQonLG}. However, the algorithm depends on rare labels and can not work in all cases (e.g., labels with similar frequency). 
ARRIVAL \cite{2019'ARRIVAL} samples a number of paths through bi-directional random walk. If there exists any sampling path between two vertices, the two vertices are reachable, otherwise unreachable. So, the answer may be false-negative. %The ARRIVAL algorithm is used to answer regular path queries on large tag networks. It can also answer LCR queries.
Similarly, the example-based regular path query (RQuBE) \cite{2021'RQuBE} also employs a sampling-based method to build a candidate vertex set and return top-k vertices based on their confidence values as the final result set. Furthermore, RQuBE can automatically infer regular expressions from user-provided exemplars, making it user-friendly for individuals with limited knowledge on regular expression. Unlike previous sampling algorithms, D. Arroyuelo et al. proposed a new algorithm in \cite{2022'Ring} that combines bit-parallel simulation and the ring index to process automaton states synchronously. In addition to generating vertices pairs from the constraints of regular expressions, there is some other work on regular path queries. For example, A. Pacaci et al. \cite{2020'Streaming} determine whether a given constraint is satisfied between two concrete entities over streaming graphs. Recent RLC queries \cite{2023'RLC} require solution paths consisting of one or more concatenations of the given sequence. %To answer , and introduced corresponding RLC indexes to process such queries.
% , then it will automatically derive the optimal regular expression. Then restart the random walk to identify center with high degrees of the vertices and extract their d-hop neighborhood to build the candidate set C. At last, by iteration path sampling of C vertex sorting and return k vertices with the highest degree of confidence as the final result set.
%Both concatenated RQuBE, however, can only provide an approximate result.

\subsubsection{Label-constrained Reachability Query} LCR specifies a set of labels and restricts the labels on the solution paths to only those within this set. %So, it can be described by the regular expression ${({l_1} \cup {l_2} \cup \dots \cup{l_k})^*}$, for constrained label set ${L = \{{l_1}, {l_2}, \dots, {l_k}\}}$, where $\cup$ is disjunction and $*$ is the Kleene star. 
As one of the most common queries on reachability in labeled graphs, several
algorithms have also been proposed to address this problem. 
LI+ \cite{2017'LI} selects a small number of landmark vertices and precomputes all pairs of vertex that can be reached via the landmarks. Then, LI+ can answer LCR queries using the stored information. Y. Chen et al. \cite{Chen'2021'Recurve} propose an algorithm that recursively decomposes the input graph while transforming the query into a series of subqueries to answer LCR queries. 
The state-of-the-art algorithm P2H+ \cite{2020'P2H,2022'P2H} introduces a complete reachable index based on the 2-hop cover framework to answer LCR. Specifically, for each vertex $u$, $Out(u)$ stores all vertices that $u$ can reach, along with the minimal set of labels on the path between $u$ and each reachable vertex. Similarly, $In(u)$ stores the corresponding information for the vertices that can reach $u$.
%To further improve the performance, the algorithm designs a high-level BFS search order and index entry construction order of vertices, namely P2H+. 
Y. Cai et al. \cite{2023'PDU} improve P2H+ by reducing the index size for vertices with one degree and eliminating unreachable queries without label constraints, denoted as PDU (P2H+DOR+UQF). %PDU One is to design an algorithm, in which the vertices with an out-degree or an in-degree are specially processed, so as to reduce the space cost of the index. The other is to propose a lightweight algorithm to rule out unreachable queries without any label constraints, so as to enhance the pruning ability.
% One is to deal specifically with vertices whose in-degree or out-degree is one, and the other is to introduce a lightweight index for answering the case of unreachable vertices. 
%While these techniques does optimize 
Since existing algorithms for LCR build complete indices, they offer the best performance on queries. However, their indexing costs (e.g. index size and indexing time) are relatively high, making them impractical for construction on large graphs. 

% They do not perform well when answering unreachable queries since they are designed for positive reachability queries. In real graphs, most of the vertex pairs are not unreachable. Furthermore, when the reachability query allows more complex constraints, it is very costly to build a full index to answer such queries. 
% as much as possible during the query to obtain a shortcut from the landmark to the target vertex. %To further improve its practical performance, the algorithm also uses non-landmark indexing and pruning to speed up pseudo queries. The main drawback of this approach is that it only makes a trade-off with Landmark. For FULL-LI, the index cannot be constructed on most large graphs, and for LI+, the worst case is that the false-query due to partial indexes is even worse than DBFS.

LCR queries restrict the reachability paths to only the edges that have labels in the given set of labels. RPQ imposes strict requirements on the label sequence based on the constraints of regular expressions. However, in real-world scenarios, users may expect more combinations of label sets (or patterns). Therefore, we propose composite patterns using logical operators that relaxes these label constraints for more flexibility. 
% However, in real-world scenarios, users may expect more combinations of label sets (or patterns) than a single set of labels. For example, when a user plans a travel to a city, she/he may expect that some cities must appear in the itinerary while others are optional or excluded. 
% Regular expressions can achieve this kind of query, but it is too complex for non-expert users. The query can also be expressed as a composite pattern, which is a combination of label sets (i.e., cities) using logical operators. We call such query as pattern-constrained reachability query. 
% Here, we define the and propose our solution to answer PCR queries. 
In contrast to previous approaches, we construct a particle index answering whether a given vertex pair is reachable under a specified pattern.
%In practice, the user's constraint on labels may not be limited to the label on the query path being a subset of the given label set, but may also be required that the given label set be a subset of the query path, or only the part of the given label set appears on the query path. Of course, this requirement is clearly expressed using regular expressions. However, for non-professional users, it is difficult to use regular expressions. Therefore, in this paper, we set up several patterns for matching inclusion relationships between labels for common requirements.

\balance

\end{document}